\newtheorem{mydef}{Definition}
\newtheorem{mypro}{Proposition}
\newtheorem{mythe}{Theorem}
\DeclareMathOperator{\cut}{\texttt{cut}}
\DeclareMathOperator{\links}{\texttt{links}}
\DeclareMathOperator{\cond}{\texttt{cond}}
\DeclareMathOperator{\ncut}{\texttt{ncut}}
\let\deg=\relax
\DeclareMathOperator{\deg}{\texttt{deg}}
\DeclareMathOperator{\dist}{\texttt{dist}}
\newcommand{\NISE}{\textsc{nise}\xspace}
\def\clap#1{\hbox to 0pt{\hss#1\hss}}
\def\mathclap{\mathpalette\mathclapinternal}
\def\mathclapinternal#1#2{%
\clap{$\mathsurround=0pt#1{#2}$}}
\title{Overlapping Community Detection Using \\ Neighborhood-Inflated Seed Expansion}
\author[1]{Joyce Jiyoung Whang}
\author[2]{David F. Gleich}
\author[1]{Inderjit S. Dhillon}
\affil[1]{Dept. of Computer Science\\
University of Texas at Austin\\
\{joyce,inderjit\}@cs.utexas.edu}
\affil[2]{Dept. of Computer Science\\
Purdue University\\
dgleich@purdue.edu}
\providecommand{\keywords}[1]{\textbf{\textit{Index Terms---}} #1}
\date{}
\begin{document}

\maketitle

\vspace{-1cm}

\begin{abstract}
Community detection is an important task in network analysis. A community (also referred to as a cluster) is a set of cohesive vertices that have more connections inside the set than outside. In many social and information networks, these communities naturally overlap. For instance, in a social network, each vertex in a graph corresponds to an individual who usually participates in multiple communities. In this paper, we propose an efficient overlapping community detection algorithm using a seed expansion approach. The key idea of our algorithm is to find good seeds, and then greedily expand these seeds based on a community metric. Within this seed expansion method, we investigate the problem of how to determine good seed nodes in a graph. In particular, we develop new seeding strategies for a personalized PageRank clustering scheme that optimizes the conductance community score. Experimental results show that our seed expansion algorithm outperforms other state-of-the-art overlapping community detection methods in terms of producing cohesive clusters and identifying ground-truth communities. We also show that our new seeding strategies are better than existing strategies, and are thus effective in finding good overlapping communities in real-world networks.

\end{abstract}

\keywords{Community Detection, Clustering, Overlapping Communities, Seed Expansion, Seeds, Personalized PageRank.}

\section{Introduction}
Community detection is one of the most important and fundamental tasks in network analysis with applications in functional prediction in biology~\cite{Lee-2013-community} and sub-market identification~\cite{andersen-2006-seed} among others. Given a network, a community is defined to be a set of cohesive nodes that have more connections inside the set than outside. Since a network can be modelled as a graph with vertices and edges, community detection can be thought as a graph clustering problem where each community corresponds to a cluster in the graph. In this manuscript, the terms \textit{cluster} and \textit{community} are used interchangeably.

The goal of traditional, exhaustive graph clustering algorithms (e.g., Metis \cite{karypis-1998-metis}, Graclus \cite{dhillon-2007-graclus}) is to partition a graph such that every node belongs to exactly one cluster. However, in many social and information networks, nodes participate in multiple communities. For instance, in a social network, nodes represent individuals and edges represent social interactions between the individuals. In this setting, a node's communities can be interpreted as its social circles. Thus, it is likely that a node belongs to multiple communities, i.e., communities naturally overlap. To find these groups, we study the problem of overlapping community detection where communities are allowed to overlap with each other and some nodes are allowed not to belong to any cluster.

In this paper, we propose an efficient overlapping community detection algorithm using a seed expansion approach. More specifically, we investigate how to select good seeds in a method that grows communities around seeds. These local expansion methods are among the most successful strategies for overlapping community detection \cite{xie-2012-survey}. However, principled methods to choose the seeds are few and far between. When they exist, they are usually computationally expensive, for instance, using maximal cliques as seeds \cite{shen-2009-eagle}. Empirically successful strategies include exhaustively exploring all individual seeds and greedy methods that randomly pick a vertex, grow a cluster, and continue with any unassigned vertex.

To find a set of good seeds, we present two effective seeding strategies which we call ``Graclus centers'' and ``Spread hubs.'' The ``Graclus centers'' seeding is based on the same distance kernel that underlies the equivalence between kernel $k$-means and graph clustering objectives \cite{dhillon-2007-graclus}. Using this distance function, we can efficiently locate a good seed \emph{within} an existing set of cohesive vertices of the graph. Specifically, we first compute many clusters using a multi-level weighted kernel $k$-means algorithm on the graph (the Graclus algorithm) \cite{dhillon-2007-graclus}, then use the corresponding distance function to compute the ``centroid vertex'' of each cluster. We use the neighborhood set of each centroid vertex as a seed region for community detection. The idea of ``Spread hubs'' seeding is to select an independent set of high degree vertices. This seeding strategy is inspired by the recent observations that there should be good clusters around high degree vertices in many real-world networks which have a power-law degree distribution \cite{whang-2012-gem}, \cite{gleich-2012-neighborhoods}. 


The algorithm we use to grow a seed set is based on personalized PageRank (PPR) clustering \cite{andersen-2006-local}. The high level idea of this expansion method is to first compute the PPR vector for each of the seeds, and then expand each seed based on the PPR score. It is important to note that we can have multiple nodes in the personalization vector, and indeed we use the entire vertex neighborhood of a seed node as the personalization vector for PPR. This \textit{neighborhood inflation} plays a critical role in the success of our algorithm. The full algorithm to compute overlapping clusters from the seeds is discussed in Section \ref{sec:algo}. We name our algorithm \NISE by abbreviating our main idea, Neighborhood-Inflated Seed Expansion.


Our experimental results show that our seeding strategies are better than existing seeding strategies, and effective in finding good overlapping communities in real-world networks. More importantly, we observe that \NISE significantly outperforms other state-of-the-art overlapping community detection methods in terms of producing cohesive clusters and identifying ground-truth communities. Also, our method scales to problems with over 45 million edges, whereas other existing methods were unable to complete on these large datasets.

\section{Preliminaries} 
\label{sec:pre}
In this section, we formally describe the overlapping community detection problem, and review some important concepts in graph clustering. Also, we introduce real-world networks which are used in our experiments. 

\subsection{Problem Statement}
Given a graph $G=(\mathcal{V}, \mathcal{E})$ with a vertex set $\mathcal{V}$ and an edge set $\mathcal{E}$, we can represent the graph as an adjacency matrix $A$ such that $A_{ij}=e_{ij}$ where $e_{ij}$ is the edge weight between vertices $i$ and $j$, or $A_{ij}=0$ if there is no edge. We assume that graphs are undirected, i.e., $A$ is symmetric. The goal of the traditional, exhaustive graph clustering problem is to partition a graph into $k$ pairwise disjoint clusters $\mathcal{C}_1, \cdots, \mathcal{C}_k$ such that $\mathcal{C}_1 \cup \cdots \cup \mathcal{C}_k = \mathcal{V}$. On the other hand, the goal of the overlapping community detection problem is to find overlapping clusters whose union is not necessarily equal to the entire vertex set $\mathcal{V}$. Formally, we seek $k$ overlapping clusters such that $\mathcal{C}_1 \cup \cdots \cup \mathcal{C}_k \subseteq \mathcal{V}$. 

\subsection{Measures of Cluster Quality}
\label{nonMetrics}

There are some popular measures for gauging the quality of clusters: cut, normalized cut, and conductance. Let us define $\links(\mathcal{C}_p, \mathcal{C}_q)$ to be the sum of edge weights between vertex sets $\mathcal{C}_p$ and $\mathcal{C}_q$.

\textbf{Cut.} The cut of cluster $\mathcal{C}_i$ is defined as the sum of edge weights between $\mathcal{C}_i$ and its complement, $\mathcal{V} \backslash \mathcal{C}_i$. That is, 
\begin{equation}
\label{cut}
\cut(\mathcal{C}_i) = \links(\mathcal{C}_i, \mathcal{V} \backslash \mathcal{C}_i). 
\end{equation}

\textbf{Normalized Cut.} The normalized cut of a cluster is defined by the cut with volume normalization as follows:
\begin{equation}
\label{ncut}
\ncut(\mathcal{C}_i) = \dfrac{\cut(\mathcal{C}_i)}{\links(\mathcal{C}_i, \mathcal{V})}.
\end{equation}

\textbf{Conductance.} The conductance of a cluster is defined to be the cut divided by the least number of edges incident on either set $\mathcal{C}_i$ or $\mathcal{V} \backslash \mathcal{C}_i$:
\begin{equation}
\nonumber
\label{cond_def}
\cond(\mathcal{C}_i) = \dfrac{\cut(\mathcal{C}_i)}{ \min \bigg(  \links(\mathcal{C}_i, \mathcal{V}), \links(\mathcal{V} \backslash \mathcal{C}_i, \mathcal{V}) \bigg) }.
\end{equation}
By definition, $\cond(\mathcal{C}_i)=\cond(\mathcal{V} \backslash \mathcal{C}_i)$. The conductance of a cluster is the probability of leaving that cluster by a one-hop walk starting from the smaller set between $\mathcal{C}_i$ and $\mathcal{V} \backslash \mathcal{C}_i$. Notice that $\cond(\mathcal{C}_i)$ is always greater than or equal to $\ncut(\mathcal{C}_i)$. 

\subsection{Graph Clustering and Weighted Kernel $k$-means}
\label{sec:wkkm}
It has been shown that a graph clustering objective is mathematically equivalent to a weighted kernel $k$-means objective \cite{dhillon-2007-graclus}. For example, let us consider the normalized cut objective of a graph $G$ which is defined to be
\begin{equation}
\label{gncut}
\ncut(G) = \min_{\mathcal{C}_1, ..., \mathcal{C}_k} \sum_{i=1}^{k} \dfrac{\links(\mathcal{C}_i, \mathcal{V} \backslash \mathcal{C}_i)}{\links(\mathcal{C}_i, \mathcal{V})}. 
\end{equation}
This objective can be shown to be equivalent to a weighted kernel $k$-means objective by defining a weight for each data point to be the degree of a vertex, and the kernel matrix to be $K=\sigma D^{-1} + D^{-1}AD^{-1}$, where $D$ is the diagonal matrix of degrees (i.e., $D_{ii}=\sum_{j=1}^n A_{ij}$ where $n$ is the total number of nodes), and $\sigma$ is a scalar typically chosen to make $K$ positive-definite. Then, we can quantify the kernel distance between a vertex $v \in \mathcal{C}_i$ and cluster $\mathcal{C}_i$, denoted $\dist(v, \mathcal{C}_i)$, as follows:\\
\begin{minipage}{\linewidth}
\small
\begin{align}
\label{dist}
& \dist(v, \mathcal{C}_i) = \\
& -\dfrac{2 \links(v, \mathcal{C}_i)}{\deg(v) \deg(\mathcal{C}_i)} + \dfrac{\links(\mathcal{C}_i, \mathcal{C}_i)}{\deg(\mathcal{C}_i)^2} + \dfrac{\sigma}{\deg(v)} - \dfrac{\sigma}{\deg(\mathcal{C}_i)} \nonumber 
\end{align}
\end{minipage}\\
where $\deg(v)=\links(v,\mathcal{V})$, and $\deg(\mathcal{C}_i)=\links(\mathcal{C}_i, \mathcal{V})$.

\begin{table*}[htbp]
{\scriptsize
    \caption{Summary of Real-world Networks.}
    \centering
    \begin{tabularx}{\linewidth}{ccccccccc}
    \toprule
    Category & Graph & No. of vertices & No. of edges & Max. Deg. & Avg. Deg. & Avg. CC & Ground-truth & Source \\
    \midrule
    Collaboration & HepPh & 11,204 & 117,619 & 491 & 21.0 & 0.6216 & N/A & \cite{snap}\\ 
    & AstroPh & 17,903 & 196,972 & 504 & 22.0 & 0.6328 & N/A & \cite{snap}\\ 
    & CondMat & 21,363 & 91,286 & 279 & 8.5 & 0.6417 & N/A & \cite{snap}\\ 
    & DBLP & 317,080 & 1,049,866 & 343 & 6.6 & 0.6324 & \checkmark & \cite{snap}\\ 
    \midrule
    Product & Amazon & 334,863 & 925,872 & 549 & 5.5 & 0.3967 & \checkmark & \cite{snap}\\
    \midrule
    Social & Orkut & 731,332 & 21,992,171 & 6,933 & 60.1 & 0.2468 & \checkmark & \cite{snap}\\
    & Flickr & 1,994,422 & 21,445,057 & 27,908 & 21.5 & 0.1881 & N/A & \cite{mislove-2008-flickr}\\ 
    & Myspace & 2,086,141 & 45,459,079 & 92,821 & 43.6 & 0.1242 & N/A & \cite{song-2012-mslj}\\ 
    & LiveJournal & 1,757,326 & 42,183,338 & 29,771 & 48.0 & 0.2400 & N/A & \cite{song-2012-mslj}\\ 
    & LiveJournal2 & 1,143,395 & 16,880,773 & 11,495 & 29.5 & 0.2535 & \checkmark & \cite{snap}\\    
    \bottomrule
    \end{tabularx}
 	\label{input}
}
\end{table*}

\begin{figure}[htbp]
\centering
\includegraphics[width=0.33\textwidth]{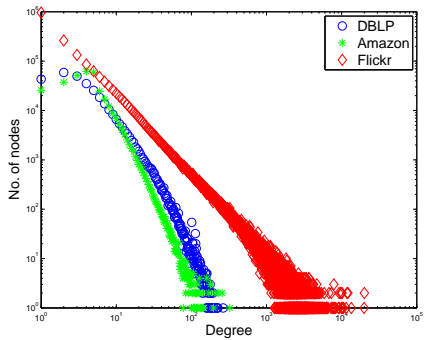}
\caption{Degree distributions of real-world networks -- the degree distributions follow a power-law.}
\label{degdist}
\end{figure}

\subsection{Datasets}
\label{sec:data}
We use ten different real-world networks including collaboration networks, social networks, and a product network from \cite{snap}, \cite{song-2012-mslj}, and \cite{mislove-2008-flickr}. The networks are presented in Table \ref{input}. All the networks are loop-less, connected, undirected graphs. 

In a collaboration network, vertices indicate authors, and edges indicate co-authorship. If authors $u$ and $v$ wrote a paper together, there exists an edge between them. For example, if a paper is written by three authors, this is represented by a clique of size three in the network. HepPh, AstroPh, and CondMat networks are constructed based on the papers submitted to arXiv e-print service. Specifically, HepPh represents the High Energy Physics (Phenomenology) category, AstroPh represents the Astrophysics category, and CondMat represents the Condensed Matter Physics category. The DBLP network is constructed based on the DBLP computer science bibliography website.

We use five different social networks: Flickr, Myspace, LiveJournal, LiveJournal2 (a variation with ground-truth), and Orkut. Flickr is an online photo sharing application, Myspace is a social entertainment networking service, LiveJournal is a blogging application where users can publish their own journals, and Orkut was a social networking website operated by Google. Users can make a friendship relationship with each other in each of these websites. So, in these social networks, nodes represent users and edges represent friendship relationships between them.

In the Amazon product network, vertices represent products and edges represent co-purchasing information. If products $u$ and $v$ are frequently co-purchased, there exists an undirected edge between them. This network is constructed based on \textit{Customers Who Bought This Item Also Bought} feature of the Amazon website.
 
In Table \ref{input}, we present the number of nodes/edges, the maximum degree, the average degree, and the average clustering coefficient (CC) of each of the networks. Figure \ref{degdist} shows the degree distributions of DBLP, Flicker and Amazon networks. We can see that the real-world networks have distinguishing characteristics: a power-law degree distribution \cite{barabasi-1999-powerlaw} and a high clustering coefficient \cite{watts-1998-largecc}, \cite{easley-2010-sna}. 

As indicated in Table \ref{input}, we have ground-truth communities \cite{snap} on some of the datasets. In DBLP, each publication venue (i.e., journal or conference) can be considered as an individual ground-truth community. In the Amazon network, each ground-truth community can be defined to be a product category that Amazon provides. In LiveJournal2 and Orkut networks, there exists user-defined social groups. On LiveJournal2 and Orkut networks, the ground-truth communities do not cover a substantial portion of the graph, so we use a subgraph which is induced by the nodes that have at least one membership in the ground-truth communities. In Table \ref{input}, the statistics about LiveJournal2 and Orkut are based on the induced subgraphs we used in our experiments.



\section{Overlapping Community Detection Using Neighborhood-Inflated Seed Expansion} 
\label{sec:algo}
We introduce our overlapping community detection algorithm, \NISE. It consists of four phases: filtering, seeding, seed expansion, and propagation. In the filtering phase, we remove regions of the graph that are trivially separable from the rest of the graph. In the seeding phase, we find good seeds in the filtered graph, and in seed expansion phase, we expand the seeds using a personalized PageRank clustering scheme. Finally, in the propagation phase, we further expand the communities to the regions that were removed in the filtering phase.


\subsection{Filtering Phase}
\label{sec:preprocessing}
The goal of the filtering phase is to identify regions of the graph where an algorithmic solution is required to identify the overlapping clusters. To explain our filtering step, recall that almost all graph partitioning methods begin by assigning each connected component to a separate partition. Any other choice of partitioning for disconnected components is entirely arbitrary. The Metis procedure \cite{karypis-1998-metis}, for instance, may combine two disconnected components into a single partition in order to satisfy a balance constraint on the partitioning. For the problem of overlapping clustering, an analogous concept can be derived from biconnected components. Formally, a biconnected component is defined as follows:

\begin{mydef}
Given a graph $G = (\mathcal{V}, \mathcal{E})$, a biconnected component is a maximal induced subgraph $G' = (\mathcal{V}', \mathcal{E}')$ that remains connected after removing any vertex and its adjacent edges in $G'$.
\end{mydef}

Let us define the size of a biconnected component to be the number of edges in $G'$. Now, consider all the biconnected components of size one. Notice that there should be no overlapping partitions that use these edges because they bridge disjoint communities. Consequently, our filtering procedure is to find the largest connected component of the graph after we remove all single-edge biconnected components. We call this the ``biconnected core" of the graph even though it may not be biconnected. Let $\mathcal{E}_S$ denote all the single-edge biconnected components. Then, the biconnected core graph is defined as follows:

\begin{mydef}
The biconnected core $G_C = (\mathcal{V}_C, \mathcal{E}_C)$ is the maximum size connected subgraph of $G'' = (\mathcal{V}, \mathcal{E} \setminus \mathcal{E}_S)$.
\end{mydef} 

Notice that the biconnected core is not the 2-core of the original graph (a $k$-core graph is a maximal subgraph of the original graph in which all nodes have degree at least $k$ \cite{seidman-1983-kcore}). Subgraphs connected to the biconnected core are called \emph{whiskers} by Leskovec et al.~\cite{leskovec-2009-natural} and we use the concept of a bridge to define them:
\begin{mydef}
A bridge is a biconnected component of size one which is directly connected to the biconnected core.
\end{mydef}
\noindent Whiskers are then defined as follows:
\begin{mydef}
A whisker $W = (\mathcal{V}_W, \mathcal{E}_W)$ is a maximal subgraph of $G$ that can be detached from the biconnected core by removing a bridge, 
\end{mydef} 

Let $\mathcal{E}_B$ be all the bridges in a graph. Notice that $\mathcal{E}_B \subseteq \mathcal{E}_S$. On the region which is not included in the biconnected core graph $G_C$, we define the detached graph $G_D$ as follows:

\begin{mydef}
$G_D = (\mathcal{V}_D, \mathcal{E}_D)$ is the subgraph of $G$ which is induced by $\mathcal{V} \setminus \mathcal{V}_C$.
\end{mydef}

Finally, given the original graph $G = (\mathcal{V}, \mathcal{E})$, $\mathcal{V}$ and $\mathcal{E}$ can be decomposed as follows:

\begin{mypro}
Given a graph $G = (\mathcal{V}, \mathcal{E})$, $\mathcal{V}=\mathcal{V}_C \cup \mathcal{V}_D$ and $\mathcal{E} = \mathcal{E}_C \cup \mathcal{E}_D \cup \mathcal{E}_B$.
\label{propo}
\end{mypro}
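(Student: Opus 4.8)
The plan is to establish the two set identities separately. The vertex identity $\mathcal{V} = \mathcal{V}_C \cup \mathcal{V}_D$ is immediate from the definitions: $G_D$ is the subgraph of $G$ induced by $\mathcal{V} \setminus \mathcal{V}_C$, so $\mathcal{V}_D = \mathcal{V} \setminus \mathcal{V}_C$ and hence $\mathcal{V}_C \cup \mathcal{V}_D = \mathcal{V}$. All the content is in the edge identity $\mathcal{E} = \mathcal{E}_C \cup \mathcal{E}_D \cup \mathcal{E}_B$, which I would prove by two inclusions after first unpacking the definitions.

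For the inclusion $\mathcal{E}_C \cup \mathcal{E}_D \cup \mathcal{E}_B \subseteq \mathcal{E}$: $G_C$ is a subgraph of $G'' = (\mathcal{V}, \mathcal{E} \setminus \mathcal{E}_S)$, so $\mathcal{E}_C \subseteq \mathcal{E} \setminus \mathcal{E}_S \subseteq \mathcal{E}$; $G_D$ is a subgraph of $G$, so $\mathcal{E}_D \subseteq \mathcal{E}$; and $\mathcal{E}_B \subseteq \mathcal{E}_S \subseteq \mathcal{E}$ as already noted in the text. Next I would record three consequences of the definitions that make the bookkeeping transparent. Since $G_C$ is the maximum size connected subgraph of $G''$, it is in fact a connected component of $G''$ (any connected subgraph lies inside some component, which is at least as large), hence an induced subgraph of $G''$ on $\mathcal{V}_C$; therefore $\mathcal{E}_C$ is exactly the set of edges in $\mathcal{E} \setminus \mathcal{E}_S$ with both endpoints in $\mathcal{V}_C$. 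Since $G_D$ is induced on $\mathcal{V} \setminus \mathcal{V}_C$, $\mathcal{E}_D$ is exactly the set of edges of $\mathcal{E}$ with both endpoints in $\mathcal{V} \setminus \mathcal{V}_C$. And for a size-one biconnected component $e \in \mathcal{E}_S$, being ``directly connected to the biconnected core'' just means that $e$ has an endpoint in $\mathcal{V}_C$, so $\mathcal{E}_B$ is exactly the set of members of $\mathcal{E}_S$ incident to $\mathcal{V}_C$.

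For the reverse inclusion $\mathcal{E} \subseteq \mathcal{E}_C \cup \mathcal{E}_D \cup \mathcal{E}_B$, I would take an arbitrary edge $e = (u,v) \in \mathcal{E}$ and split on whether $e \in \mathcal{E}_S$. If $e \notin \mathcal{E}_S$, then $e$ is an edge of $G''$, so $u$ and $v$ lie in a common connected component of $G''$; if that component is $G_C$ then both endpoints lie in $\mathcal{V}_C$ and $e \in \mathcal{E}_C$, and otherwise (components being disjoint) neither endpoint lies in $\mathcal{V}_C$, so $e \in \mathcal{E}_D$. If instead $e \in \mathcal{E}_S$, then either $e$ is incident to $\mathcal{V}_C$, in which case $e$ is a bridge and $e \in \mathcal{E}_B$, or $e$ is not incident to $\mathcal{V}_C$, in which case both endpoints lie in $\mathcal{V} \setminus \mathcal{V}_C$ and $e \in \mathcal{E}_D$. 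In every case $e \in \mathcal{E}_C \cup \mathcal{E}_D \cup \mathcal{E}_B$, completing the proof.

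I do not expect a real obstacle here: the proposition is a structural bookkeeping statement and the argument is a definition-driven case analysis. The one place that warrants care is the claim that the ``maximum size connected subgraph'' $G_C$ of $G''$ is a genuine connected component of $G''$, together with the resulting identification of $\mathcal{E}_C$ with all non-$\mathcal{E}_S$ edges inside $\mathcal{V}_C$; if $G_C$ were allowed to be a proper connected subgraph of its component, an edge of $G''$ with both endpoints in $\mathcal{V}_C$ could fail to lie in $\mathcal{E}_C$ and the decomposition would break. It is also worth noting (though not strictly needed) that $\mathcal{E}_S$ is precisely the set of cut edges of $G$ --- an edge forms a size-one biconnected component exactly when it lies on no cycle --- which is why $\mathcal{E}_B \subseteq \mathcal{E}_S$ and why every edge joining $\mathcal{V}_C$ to $\mathcal{V} \setminus \mathcal{V}_C$ necessarily lies in $\mathcal{E}_S$.
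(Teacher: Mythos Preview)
Your proof is correct and is, in spirit, exactly the paper's approach: the paper's entire proof is the single sentence ``This follows from the definitions of the biconnected core, bridges, and the detached graph,'' and what you have written is a careful unpacking of precisely those definitions via a case analysis on edges. Your observation that $G_C$ must be a full connected component of $G''$ (so that $\mathcal{E}_C$ really equals all non-$\mathcal{E}_S$ edges with both endpoints in $\mathcal{V}_C$), and your identification of $\mathcal{E}_S$ with the cut edges of $G$, are exactly the implicit facts the paper is relying on.
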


\begin{proof}
This follows from the definitions of the biconnected core, bridges, and the detached graph.
\end{proof}

\begin{figure}[t]
\centering
\includegraphics[width=0.35\textwidth]{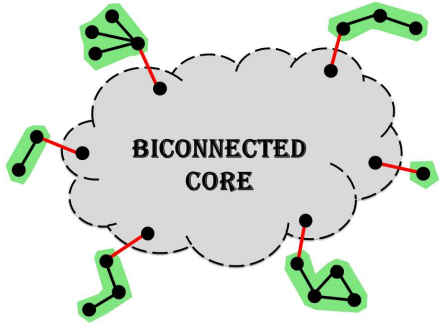}
\caption{Biconnected core, whiskers, and bridges -- grey region indicates the biconnected core where vertices are densely connected to each other, and green components indicate whiskers. Red edges indicate bridges which connect the biconnected core and each of the whiskers.}
\label{biconncore}
\end{figure}

\begin{table*}[t]
{\scriptsize
\caption{Biconnected core and the detached graph (in the last column, LCC refers to the largest connected component).}
\begin{tabularx}{\linewidth}{XXXXX}
\toprule
 &\multicolumn{2}{l}{Biconnected core} &\multicolumn{2}{l}{Detached graph}  \\
 \cmidrule(l){2-3} \cmidrule(l){4-5} 
 & No. of vertices (\%) & No. of edges (\%) & No. of components & Size of the LCC (\%) \\
 \midrule
 HepPh & 9,945 (88.8\%) & 116,099 (98.7\%) & 1,123 &  21 (0.1874\%) \\ 
 AstroPh & 16,829 (94.0\%) &  195,835 (99.4\%) & 957 &  23 (0.1285\%) \\ 
 CondMat & 19,378 (90.7\%) & 89,128 (97.6\%) & 1,669 &  12 (0.0562\%)  \\ 
 DBLP & 264,341 (83.4\%) & 991,125 (94.4\%) & 43,093 & 32 (0.0101\%)  \\ 
 Amazon &  291,449 (87.0\%) & 862,836 (93.2\%) & 25,835 &  250 (0.0747\%) \\
 Flickr & 954,672 (47.9\%) & 20,390,649 (95.1\%) & 864,628 & 107 (0.0054\%) \\
 Myspace & 1,724,184 (82.7\%) & 45,096,696 (99.2\%) & 332,596 & 32 (0.0015\%)  \\
 LiveJournal & 1,650,851 (93.9\%) & 42,071,541 (99.7\%) & 101,038 & 105 (0.0060\%) \\ 
 LiveJournal2 & 1,076,499 (94.2\%) & 16,786,580 (99.4\%) & 59,877 & 91 (0.0080\%) \\
 Orkut & 729,634 (99.8\%) & 21,990,221 (99.9\%) & 1,529 & 15 (0.0021\%) \\
 \bottomrule
\end{tabularx}
\label{bicore}
}
\end{table*}

Figure \ref{biconncore} illustrates the biconnected core, whiskers, and bridges. The output of our filtering phase is the biconnected core graph where whiskers are filtered out. The filtering phase removes regions of the graph that are clearly partitionable from the remainder. Note that there is no overlap between any of the whiskers. This indicates that there is no need to apply overlapping community detection algorithm on the detached regions.

Table \ref{bicore} shows the size of the biconnected core and the connectivity of the detached graph in our real-world networks. Details of these networks are presented in Table \ref{input}. We compute the size of the biconnected core in terms of the number of vertices and edges. The number reported in the parenthesis shows how many vertices or edges are included in the biconnected core, i.e., the percentages of $|\mathcal{V}_C|$/$|\mathcal{V}|$ and $|\mathcal{E}_C|$/$|\mathcal{E}|$, respectively. We also compute the number of connected components in the detached graph, and the size of the largest connected component (LCC in Table \ref{bicore}) in terms of the number of vertices. The number reported in the parenthesis indicates the relative size of the largest connected component compared to the number of vertices in the original graph.

We can see that the biconnected core contains a substantial portion of the edges. In terms of the vertices, the biconnected core contains around 80 or 90 percentage of the vertices for all datasets except Flickr. In Flickr, the biconnected core only contains around 50 percentage of the vertices while it contains 95 percentage of edges. This indicates that the biconnected core is dense while the detached graph is quite sparse. Recall that the biconnected core is one connected component. On the other hand, in the detached graph, there are many connected components, which implies that the vertices in the detached graph are likely to be disconnected with each other. Notice that each connected component in the detached graph corresponds to a whisker. So, the largest connected component can be interpreted as the largest whisker. Based on the statistics of the detached graph, we can see  that whiskers tend to be separable from each other, and there are no significant size whiskers. Also, the gap between the sizes of the biconnected core and the largest whisker is significant. All these statistics and observations support that our filtering phase creates a reasonable and more tractable input for an overlapping community detection algorithm.

\subsection{Seeding Phase}
\label{sec:seeding}
Once we obtain the biconnected core graph, we find seeds in this filtered graph. The goal of an effective seeding strategy is to identify a diversity of vertices, each of which lies within a cluster of good conductance. This identification should not be too computationally expensive. 


\textbf{Graclus Centers.}
One way to achieve these goals is to first apply a high quality and fast graph partitioning scheme (disjoint clustering of vertices in a graph) in order to compute a collection of sets with fairly small conductance. Then, we select a set of seeds by picking the most central vertex from each set (cluster). The idea here is roughly that we want something that is close to the partitioning -- which ought to be good -- but that allows overlap to produce better boundaries between the partitions. 

See Algorithm~\ref{graclus_centers} for the full procedure. In practice, we perform top-down hierarchical clustering using Graclus \cite{dhillon-2007-graclus} to get a large number of clusters. Then, we take the center of each cluster as a seed -- the center of a cluster is defined to be the vertex that is closest to the cluster centroid (as discussed in Section \ref{sec:wkkm}, we can quantify the distance between a vertex and a cluster centroid by using the kernel that underlies the relationship between kernel $k$-means and graph clustering); see steps 5 and 7 in Algorithm \ref{graclus_centers}. If there are several vertices whose distances are tied for the center of a cluster, we include all of them.

\begin{algorithm}[t]
\caption{Seeding by Graclus Centers}
\label{graclus_centers}
\begin{algorithmic}[1]
\floatname{algorithm}{Procedure}
\renewcommand{\algorithmicrequire}{\textbf{Input:}}
\renewcommand{\algorithmicensure}{\textbf{Output:}}
\REQUIRE graph $G$, the number of seeds $k$.
\ENSURE the seed set $\mathcal{S}$.
\STATE Compute exhaustive and non-overlapping clusters $\mathcal{C}_i$ ($i$=$1,...,k$) on $G$.
\STATE Initialize $\mathcal{S} = \emptyset$.
\FOR {each cluster $\mathcal{C}_i$}
	\FOR {each vertex $v \in \mathcal{C}_i$}
	\STATE Compute $\dist(v, \mathcal{C}_i)$ using (\ref{dist}).
	\ENDFOR
	\STATE $\mathcal{S} = \{ \underset{v}{\operatorname{argmin}}$ $\dist(v, \mathcal{C}_i) \} \cup \mathcal{S}$.
\ENDFOR
\end{algorithmic}
\end{algorithm}

\textbf{Spread Hubs.}
From another viewpoint, the goal is to select a set of well-distributed seeds in the graph, such that they will have high coverage after we expand the sets. We greedily choose an independent set of $k$ points in the graph by looking at vertices in order of decreasing degree. For this heuristic, we draw inspiration from the distance function~\eqref{dist}, which shows that the distance between a vertex and a cluster is inversely proportional to degree. Thus, high degree vertices are expected to have small distances to many other vertices. This also explains why we call the method \emph{spread hubs}. It also follows from the recent results in \cite{gleich-2012-neighborhoods}, \cite{whang-2012-gem}, \cite{kang-2011-caveman} which show that there should be good clusters around high degree vertices in power-law graphs with high clustering coefficients. We use an independent set in order to avoid picking seeds nearby each other. 

Our full procedure is described in Algorithm~\ref{spread_hubs}. In the beginning, all the vertices are unmarked. Until $k$ seeds are chosen, the following procedure is repeated: among unmarked vertices, the highest degree vertex is selected as a seed, and then the selected vertex and its neighbors are marked. As the algorithm proceeds exploring hubs in the network, if there are several vertices whose degrees are the same, we take an independent set of those that are unmarked. This step may result in more than $k$ seeds, however, the final number of returned seeds typically does not exceed the input $k$ too much because there usually are not too many high degree vertices.

\begin{algorithm}[t]
\caption{Seeding by Spread Hubs}
\label{spread_hubs}
\begin{algorithmic}[1]
\floatname{algorithm}{Procedure}
\renewcommand{\algorithmicrequire}{\textbf{Input:}}
\renewcommand{\algorithmicensure}{\textbf{Output:}}
\REQUIRE graph $G = (\mathcal{V}, \mathcal{E})$, the number of seeds $k$.
\ENSURE the seed set $\mathcal{S}$.
\STATE Initialize $\mathcal{S} = \emptyset$.
\STATE All vertices in $\mathcal{V}$ are unmarked.
\WHILE {$|\mathcal{S}| < k$}
	\STATE Let $\mathcal{T}$ be the set of unmarked vertices 
	 with max degree.
	\FOR {\textbf{each} $t \in \mathcal{T}$}
		\IF {$t$ is unmarked}
			\STATE $\mathcal{S} = \{t\} \cup \mathcal{S}$.
			\STATE Mark $t$ and its neighbors.
		\ENDIF
	\ENDFOR
\ENDWHILE
\end{algorithmic}
\end{algorithm}


\subsection{Seed Expansion Phase}
\label{sec:ppr}

Once we have a set of seed vertices, we wish to expand the clusters around those seeds. An effective technique for this task is using a personalized PageRank (PPR) vector~\cite{page1999-pagerank}, also known as a random-walk with restart \cite{pan-2004-cross}. A personalized PageRank vector is the stationary distribution of a random walk that, with probability $\alpha$ follows a step of a random walk and with probability $(1-\alpha)$ jumps back to a seed node. If there are multiple seed nodes, then the choice is usually uniformly random.  Thus, nodes close by the seed are more likely to be visited. 

Recently, such techniques have been shown to produce communities that best match communities found in real-world networks \cite{abrahao-2012-ground}. In fact, personalized PageRank vectors have close relationships to graph cuts and clustering methods. Andersen et al. \cite{andersen-2006-local} show that a particular algorithm to compute a personalized PageRank vector, followed by a sweep over all cuts induced by the vector, will identify a set of good conductance within the graph. They prove this via a ``localized Cheeger inequality'' that states, informally, that the set identified via this procedure has a conductance that is not too far away from the best conductance of any set containing that vertex. Also, Mahoney et al. \cite{mahoney-2012-local} show that personalized PageRank is, effectively, a seed-biased eigenvector of the Laplacian. They also show a limit to relate the personalized PageRank vectors to the Fiedler vector of a graph.

\begin{algorithm}[t]
\caption{Seed Expansion by PPR}
\label{alg:ppr-cluster}
\begin{algorithmic}[1]
\floatname{algorithm}{Procedure}
\renewcommand{\algorithmicrequire}{\textbf{Input:}}
\renewcommand{\algorithmicensure}{\textbf{Output:}}
\REQUIRE graph $G = (\mathcal{V}, \mathcal{E})$, a seed node $s \in \mathcal{S}$, PageRank link-following probability parameter $0 < \alpha < 1$, accuracy $\varepsilon > 0$
\ENSURE low conductance set $\mathcal{C}$
\STATE Set $\mathcal{T} = \{ s \} \cup \{ \text{neighbors of s} \}$
\STATE Initialize $x_v \!=\! 0$ for $v \in \mathcal{V}$
\STATE Initialize $r_v = 0$ for $v \in \mathcal{V} \setminus \mathcal{T}$, $r_v \!=\! 1/|\mathcal{T}|$ for $v \in \mathcal{T}$
\WHILE {any $r_v >  \deg(v) \varepsilon$}
\STATE Update $x_v = x_v + (1-\alpha) r_v$.
\STATE For each $(v,u) \in \mathcal{E}$, \\ \hspace{1em} update $r_u = r_u + \alpha r_v/(2 \deg(v))$
\STATE Update $r_v = \alpha r_v/2$
\ENDWHILE
\STATE Sort vertices by decreasing $x_v/\deg(v)$
\STATE For each prefix set of vertices in the sorted list, compute
the conductance of that set and set $\mathcal{C}$ to be the set that achieves the minimum.
\end{algorithmic}
\end{algorithm}

We briefly summarize the PPR-based seed expansion procedure in Algorithm \ref{alg:ppr-cluster} (each seed is expanded by this procedure). Please see Andersen et al. \cite{andersen-2006-local} for a full description of the algorithm. The high level idea of this expansion method is that given a set of restart nodes (denoted by $\mathcal{T}$ in Algorithm \ref{alg:ppr-cluster}), we first compute the PPR vector, examine nodes in order of highest to lowest PPR score, and then return the set that achieves the minimum conductance.

It is important to note that we can have multiple nodes in $\mathcal{T}$ (which corresponds to nonzero elements in the personalization vector in PPR), and indeed we use the entire vertex neighborhood of a seed node as the restart nodes (see step 1 in Algorithm \ref{alg:ppr-cluster}). Since we do not just use a singleton seed but also use its neighbors as the restart nodes in PPR, we call step 1 \textit{neighborhood inflation}. We empirically observed that this neighborhood inflation plays a critical role in producing low conductance communities. See Section \ref{sec:exp} for details. Recently, Gleich and Seshadhri \cite{gleich-2012-neighborhoods} have provided some theoretical justification for why neighborhood-inflated seeds may outperform a singleton seed in PPR expansion on many real-world networks.

Steps 2-8 are closely related to a coordinate descent optimization procedure \cite{bonchi-2012-katz} on the PageRank linear system. Although it may not be apparent from the procedure, this algorithm is remarkably efficient when combined with appropriate data structures.  The algorithm keeps two vectors of values for each vertex, ${\bf x}$ and ${\bf r}$. In a large graph, most of these values will remain zero on the vertices and hence, these need not be stored. Our implementation uses a hash table for the vectors ${\bf x}$ and ${\bf r}$. Consequently, the sorting step is only over a small fraction of the total vertices. Typically, we find this method takes only a few milliseconds, even for a large graph.

In the original PPR clustering scheme \cite{andersen-2006-local}, the PPR score is divided by the degree of each node (step 9) to remove bias towards high degree nodes. This step converts a PageRank vector, a left eigenvector of a Markov chain, into the right eigenvector of a Markov chain. Right eigenvectors are close relatives of the Fiedler vector of a graph, and so this degree normalization produces a vector that we call the \emph{Fiedler Personalized PageRank vector} because of this relationship. Fiedler vectors also satisfy Cheeger inequalities, just like the Fiedler Personalized PageRank vectors.  However, Kloumann and Kleinberg \cite{kloumann-2014-membership} recently reported that this degree normalization might slightly degrade the quality of the output clusters in terms of matching with ground-truth communities in some real-world networks. So, in our experiments, we also try using the PPR score which we just call \textit{PPR}. We compare the performance of the Fiedler PPR and PPR in Section \ref{sec:exp}.

In Algorithm \ref{alg:ppr-cluster}, there are two parameters which are related to PPR computation: $\alpha$ and $\varepsilon$. We follow standard practice for PPR clustering on an undirected graph and set $\alpha = 0.99$ \cite{leskovec-2009-natural}. This value yields results that are similar to those without damping, yet have bounded computational time. The parameter $\varepsilon$ is an accuracy parameter. As $\varepsilon \to 0$, the final vector solution ${\bf x}$ tends to the exact solution of the PageRank linear system.  When used for clustering, however, this parameter controls the effective \emph{size} of the final cluster. If $\varepsilon$ is large (about $10^{-2})$, then the output vector is inaccurate, incredibly sparse, and the resulting cluster is small.  If $\varepsilon$ is small, say $10^{-8}$, then the PageRank vector is accurate, nearly dense, and the resulting cluster may be large.  We thus run the PPR clustering scheme several times, with a range of accuracy parameters that are empirically designed to produce clusters with between 1 and 50,000 times the number of  edges in the initial seed set. The final community we select is the one with the best conductance score from these possibilities.  

\subsection{Propagation Phase}
Once we get the personalized PageRank communities on the biconnected core, we further expand each of the communities to the regions detached in the filtering phase. Our assignment procedure is straightforward: for each detached whisker connected via a bridge, we add that piece to all of the clusters that utilize the other vertex in the bridge. This procedure is described in Algorithm~\ref{assign}. In this way, each community $\mathcal{C}_i$ is expanded. 


\begin{algorithm}[t]
\caption{Propagation Procedure}
\label{assign}
\begin{algorithmic}[1]
\floatname{algorithm}{Procedure}
\renewcommand{\algorithmicrequire}{\textbf{Input:}}
\renewcommand{\algorithmicensure}{\textbf{Output:}}
\REQUIRE graph $G = (\mathcal{V}, \mathcal{E})$, biconnected core $G_C = (\mathcal{V}_C, \mathcal{E}_C)$, communities of $G_C: \mathcal{C}_i$ ($i=1,...,k$) $\in \mathcal{C}$.
\ENSURE communities of $G$.
\FOR {each $\mathcal{C}_i \in \mathcal{C}$}
	\STATE Detect bridges $\mathcal{E}_{B_i}$ attached to $\mathcal{C}_i$.
	\FOR {each $b_j \in \mathcal{E}_{B_i}$}
	\STATE Detect the whisker $w_j = (\mathcal{V}_j, \mathcal{E}_j)$ which is attached to $b_j$.
 	\STATE $\mathcal{C}_i = \mathcal{C}_i \cup \mathcal{V}_j$.
	\ENDFOR
\ENDFOR
\end{algorithmic}
\end{algorithm}

We now show that our propagation procedure only improves the quality of the final clustering result in terms of the normalized cut metric. To do this, we need to fix some notation. Let $\mathcal{E}_{B_i}$ be a set of bridges which are attached to $\mathcal{C}_i$, and $W_{\mathcal{C}_i}$ be a set of whiskers which are attached to the bridges, i.e., $W_{\mathcal{C}_i} = (\mathcal{V}_{W_i}, \mathcal{E}_{W_i})$ where 
\[ 
  w_j = (\mathcal{V}_j, \mathcal{E}_j) \in W_{\mathcal{C}_i}; \; \; 
  \mathcal{V}_{W_i} = \bigcup\limits_{\mathclap{w_j \in W_{\mathcal{C}_i}}} \mathcal{V}_j; \; 
  \text{ and } \;  
  \mathcal{E}_{W_i} = \bigcup\limits_{\mathclap{w_j \in W_{\mathcal{C}_i}}} \mathcal{E}_j. \]
Finally, let $\mathcal{C}_i'$ denote the expanded $\mathcal{C}_i$, where $|\mathcal{C}_i'| \geq |\mathcal{C}_i|$. Equality holds in this expression when there is no bridge attached to $\mathcal{C}_i$. When we expand $\mathcal{C}_i$ using Algorithm \ref{assign}, $\mathcal{C}_i'$ is equal to $\{ \mathcal{C}_i \bigcup \mathcal{V}_{W_i} \}$. The following results show that we only decrease the size of the (normalized) cut by adding the whiskers.

\begin{mythe}
If a community $\mathcal{C}_i$ is expanded to $\mathcal{C}_i'$ using Algorithm \ref{assign}, $\cut(\mathcal{C}_i') = \cut(\mathcal{C}_i) -  \links(\mathcal{V}_{W_i},\mathcal{C}_i)$. 
\label{th1}
\end{mythe}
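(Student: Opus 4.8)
The plan is a direct set-algebra computation built on one structural observation about whiskers. Write $\mathcal{C}_i' = \mathcal{C}_i \cup \mathcal{V}_{W_i}$, a disjoint union (whisker vertices lie in $\mathcal{V}_D = \mathcal{V}\setminus\mathcal{V}_C$, while $\mathcal{C}_i \subseteq \mathcal{V}_C$), so that $\mathcal{V}\setminus\mathcal{C}_i' = (\mathcal{V}\setminus\mathcal{C}_i)\setminus\mathcal{V}_{W_i}$. By additivity of $\links(\cdot,\cdot)$ over a disjoint union in the first argument,
\[
\cut(\mathcal{C}_i') = \links(\mathcal{C}_i', \mathcal{V}\setminus\mathcal{C}_i')
 = \links(\mathcal{C}_i, \mathcal{V}\setminus\mathcal{C}_i') + \links(\mathcal{V}_{W_i}, \mathcal{V}\setminus\mathcal{C}_i').
\]
For the first term, since $\mathcal{V}_{W_i}\subseteq \mathcal{V}\setminus\mathcal{C}_i$, additivity in the second argument gives $\links(\mathcal{C}_i, \mathcal{V}\setminus\mathcal{C}_i') = \links(\mathcal{C}_i,\mathcal{V}\setminus\mathcal{C}_i) - \links(\mathcal{C}_i,\mathcal{V}_{W_i}) = \cut(\mathcal{C}_i) - \links(\mathcal{V}_{W_i},\mathcal{C}_i)$, using symmetry of $A$. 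So the whole claim reduces to showing the second term vanishes: $\links(\mathcal{V}_{W_i}, \mathcal{V}\setminus\mathcal{C}_i') = 0$.

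The key step—and the one place the whisker structure is really used—is the claim that every edge of $G$ with exactly one endpoint in $\mathcal{V}_{W_i}$ has its other endpoint in $\mathcal{C}_i$; equivalently, $\links(\mathcal{V}_{W_i}, \mathcal{V}\setminus\mathcal{V}_{W_i}) = \links(\mathcal{V}_{W_i},\mathcal{C}_i)$. To see this, take such an edge $e=(u,w)$ with $w\in\mathcal{V}_{W_i}\subseteq\mathcal{V}_D$ and $u\notin\mathcal{V}_{W_i}$. By Proposition~\ref{propo}, $e\in\mathcal{E}_C\cup\mathcal{E}_D\cup\mathcal{E}_B$; it cannot lie in $\mathcal{E}_C$ (those edges have both endpoints in $\mathcal{V}_C$, disjoint from $\mathcal{V}_D$). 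If $e\in\mathcal{E}_D$, then $e$ lies inside the induced detached graph $G_D$, whose connected components are exactly the whiskers; since $\mathcal{V}_{W_i}$ is a union of entire such components and $w$ is in one of them, $u$ must lie in the same component, hence $u\in\mathcal{V}_{W_i}$, a contradiction. Therefore $e\in\mathcal{E}_B$ is a bridge, with $w$ in some whisker $w_j\in W_{\mathcal{C}_i}$ and $u\in\mathcal{V}_C$. Because each whisker is detachable by removing a single bridge, $w_j$ is attached to the core by exactly that one bridge $b_j$; and since $w_j\in W_{\mathcal{C}_i}$, by definition $b_j$ is attached to $\mathcal{C}_i$, i.e. $u\in\mathcal{C}_i$. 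This proves the claim.

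Finishing: from the claim, any edge from $\mathcal{V}_{W_i}$ to $\mathcal{V}\setminus\mathcal{C}_i' = (\mathcal{V}\setminus\mathcal{V}_{W_i})\setminus\mathcal{C}_i$ would have one endpoint in $\mathcal{V}_{W_i}$ and the other outside both $\mathcal{V}_{W_i}$ and $\mathcal{C}_i$, which is impossible; hence $\links(\mathcal{V}_{W_i}, \mathcal{V}\setminus\mathcal{C}_i') = 0$. Substituting back yields $\cut(\mathcal{C}_i') = \cut(\mathcal{C}_i) - \links(\mathcal{V}_{W_i},\mathcal{C}_i)$, as claimed. I expect the only real obstacle to be stating the structural claim cleanly and citing the right definitions/Proposition~\ref{propo} for it; the remaining steps are routine additivity of the $\links$ function over disjoint sets, and the corollary about $\ncut$ (that $\ncut(\mathcal{C}_i')\le\ncut(\mathcal{C}_i)$) will follow since the numerator strictly decreases while the volume $\links(\mathcal{C}_i,\mathcal{V})$ only grows.
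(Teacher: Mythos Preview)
Your proof is correct and rests on the same structural fact the paper uses, namely that every edge leaving $\mathcal{V}_{W_i}$ lands in $\mathcal{C}_i$ (the paper states this as $\links(\mathcal{V}_{W_i},\mathcal{V}) = \links(\mathcal{V}_{W_i},\mathcal{V}_{W_i}) + \links(\mathcal{V}_{W_i},\mathcal{C}_i)$ ``by definition of whiskers,'' whereas you derive it carefully from Proposition~\ref{propo} and the edge decomposition). The only difference is cosmetic: the paper rewrites $\cut(\cdot)$ as $\links(\cdot,\mathcal{V})-\links(\cdot,\cdot)$ and computes the volume and internal-links terms for $\mathcal{C}_i'$ separately before subtracting, while you split $\links(\mathcal{C}_i',\mathcal{V}\setminus\mathcal{C}_i')$ directly over the disjoint union $\mathcal{C}_i'=\mathcal{C}_i\cup\mathcal{V}_{W_i}$ and show the $\mathcal{V}_{W_i}$ contribution vanishes. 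Both routes are one-line additivity arguments once the structural fact is in hand; yours is arguably cleaner and makes the role of that fact more explicit.
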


\begin{proof}
Recall that $\cut(\mathcal{C}_i)$ is defined as follows:
\begin{align}
\cut(\mathcal{C}_i) &= \links(\mathcal{C}_i, \mathcal{V} \setminus \mathcal{C}_i). \nonumber \\
		 &= \links(\mathcal{C}_i, \mathcal{V}) - \links(\mathcal{C}_i, \mathcal{C}_i). \nonumber
\end{align}
Let us first consider $\links(\mathcal{C}_i', \mathcal{V})$ as follows: 
\begin{equation}
\links(\mathcal{C}_i', \mathcal{V}) = \links(\mathcal{C}_i, \mathcal{V}) + \links(\mathcal{V}_{W_i}, \mathcal{V}) \\ - \links(\mathcal{V}_{W_i}, \mathcal{C}_i). \nonumber
\end{equation}
Notice that $\links(\mathcal{V}_{W_i}, \mathcal{V}) = \links(\mathcal{V}_{W_i}, \mathcal{V}_{W_i}) + \links(\mathcal{V}_{W_i}, \mathcal{C}_i)$ by definition of whiskers. Thus, $\links(\mathcal{C}_i', \mathcal{V})$ can be expressed as follows:
\begin{equation}
\label{civ}
\links(\mathcal{C}_i', \mathcal{V}) = \links(\mathcal{C}_i, \mathcal{V}) + \links(\mathcal{V}_{W_i}, \mathcal{V}_{W_i}).
\end{equation}

On the other hand, $\links(\mathcal{C}_i', \mathcal{C}_i')$ can be expressed as follows:
\begin{equation}
\label{cici}
\links(\mathcal{C}_i', \mathcal{C}_i') = \links(\mathcal{V}_{W_i},\mathcal{V}_{W_i}) + \links(\mathcal{C}_i,\mathcal{C}_i) \\ + \links(\mathcal{V}_{W_i},\mathcal{C}_i). 
\end{equation}

Now, let us compute $\cut(\mathcal{C}_i')$ which is defined by
\begin{equation}
\label{cutci}
\cut(\mathcal{C}_i') = \links(\mathcal{C}_i', \mathcal{V}) - \links(\mathcal{C}_i', \mathcal{C}_i').
\end{equation}

By rewriting (\ref{cutci}) using (\ref{civ}) and (\ref{cici}), we can express $\cut(\mathcal{C}_i')$ as follows: \\
\centerline{$\cut(\mathcal{C}_i') = \cut(\mathcal{C}_i) - \links(\mathcal{V}_{W_i},\mathcal{C}_i)$.}
\end{proof}

\begin{mythe}
If a community $\mathcal{C}_i$ is expanded to $\mathcal{C}_i'$ using Algorithm \ref{assign}, $\ncut(\mathcal{C}_i') \leq \ncut(\mathcal{C}_i)$.
\end{mythe}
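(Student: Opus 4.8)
The plan is to reduce the claim to a single elementary inequality between non-negative quantities, using Theorem~\ref{th1} for the numerator and equation~\eqref{civ} for the denominator. First I would recall that by definition $\ncut(\mathcal{C}_i) = \cut(\mathcal{C}_i)/\links(\mathcal{C}_i,\mathcal{V})$, and likewise $\ncut(\mathcal{C}_i') = \cut(\mathcal{C}_i')/\links(\mathcal{C}_i',\mathcal{V})$. Theorem~\ref{th1} gives $\cut(\mathcal{C}_i') = \cut(\mathcal{C}_i) - \links(\mathcal{V}_{W_i},\mathcal{C}_i)$, and equation~\eqref{civ} from the previous proof gives $\links(\mathcal{C}_i',\mathcal{V}) = \links(\mathcal{C}_i,\mathcal{V}) + \links(\mathcal{V}_{W_i},\mathcal{V}_{W_i})$. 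Substituting these yields an explicit expression for $\ncut(\mathcal{C}_i')$ in terms of quantities associated with $\mathcal{C}_i$ and the attached whiskers.

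Next I would introduce shorthand, say $a = \cut(\mathcal{C}_i)$, $b = \links(\mathcal{C}_i,\mathcal{V})$, $c = \links(\mathcal{V}_{W_i},\mathcal{C}_i)$, and $d = \links(\mathcal{V}_{W_i},\mathcal{V}_{W_i})$, so that the desired inequality becomes $\dfrac{a-c}{b+d} \le \dfrac{a}{b}$. I would observe that $b > 0$ (the biconnected core cluster $\mathcal{C}_i$ is non-empty and incident to edges) and $b + d > 0$, so cross-multiplication is valid and the inequality is equivalent to $(a-c)b \le a(b+d)$, i.e. to $0 \le ad + cb$. Since all four quantities $a,b,c,d$ are sums of non-negative edge weights, this holds trivially, which completes the argument.

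I do not anticipate a genuine obstacle here; the only things to be careful about are (i) justifying that the denominators are strictly positive so that the inequalities may be manipulated freely, and (ii) making sure the $\links$ terms are indeed non-negative, which is immediate for an undirected graph with non-negative edge weights. I would present the chain of equivalences cleanly rather than grinding through the substitution symbolically, and conclude that equality holds exactly when $ad + cb = 0$, i.e. when no bridge is attached to $\mathcal{C}_i$ (so that $\mathcal{C}_i' = \mathcal{C}_i$), matching the remark preceding the theorem.
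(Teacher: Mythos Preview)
Your proposal is correct and follows essentially the same route as the paper: both substitute Theorem~\ref{th1} and equation~\eqref{civ} into the definition of $\ncut(\mathcal{C}_i')$ and conclude the inequality from non-negativity of the $\links$ terms, with equality when $\mathcal{E}_{B_i}=\emptyset$. Your version is slightly more explicit in spelling out the cross-multiplication step and in verifying positivity of the denominators, but the argument is the same.
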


\begin{proof}
Recall that 
\begin{align}
\ncut(\mathcal{C}_i) &= \dfrac{\cut(\mathcal{C}_i)}{\links(\mathcal{C}_i, \mathcal{V})}. \nonumber
\end{align}
On the other hand, by Theorem \ref{th1}, we can represent $\ncut(\mathcal{C}_i')$ as follows:
\begin{align}
\ncut(\mathcal{C}_i') &= \dfrac{\cut(\mathcal{C}_i')}{\links(\mathcal{C}_i', \mathcal{V})}. \nonumber \\
&= \dfrac{\cut(\mathcal{C}_i)-  \links(\mathcal{V}_{W_i},\mathcal{C}_i)}{\links(\mathcal{C}_i, \mathcal{V}) + \links(\mathcal{V}_{W_i}, \mathcal{V}_{W_i})}. \nonumber
\end{align}
Therefore, $\ncut(\mathcal{C}_i') \leq \ncut(\mathcal{C}_i)$. Equality holds when there is no bridge attached to $\mathcal{C}_i$, i.e., $\mathcal{E}_{B_i} = \emptyset$.
\end{proof}

\subsection{Time Complexity Analysis}
We summarize the time complexity of our overall algorithm in Table \ref{time}. The filtering phase requires computing biconnected components in a graph, which takes $O(|\mathcal{V}| + |\mathcal{E}|)$ time. The complexity of ``Graclus centers'' seeding strategy is determined by the complexity of hierarchical clustering using Graclus. Recall that ``Spread hubs'' seeding strategy requires nodes to be sorted according to their degrees. Thus, the complexity of this strategy is bounded by the sorting operation (we can use a bucket sort). Expanding each seed requires solving multiple personalized PageRank clustering problems. The complexity of this operation is complicated to state compactly \cite{andersen-2006-local}, but it scales with the output size of each cluster, $\links(\mathcal{C}_i,\mathcal{V}_C)$. Finally, our simple propagation procedure scans the regions that were not included in the biconnected core and attaches them to the final communities.

\begin{table}[htbp]
    \caption{Time complexity of each phase.}
    \centering
    {\footnotesize
    \begin{tabularx}{0.6\linewidth}{lll}
    \toprule
     Phase & & Time complexity \\
    \midrule
    Filtering & & $O(|\mathcal{V}| + |\mathcal{E}|)$ \\
    \midrule
    \multirow{2}{*}{Seeding} & Graclus centers & {\footnotesize $O( \lceil \log k  \rceil ( |\mathcal{V}_C| + |\mathcal{E}_C| ) )$} \\ 
     & Spread hubs & {\footnotesize $O(|\mathcal{V}_C|)$} \\ 
    \midrule
    Seed expansion & & {\footnotesize $O(\sum_{i}^k \links(\mathcal{C}_i,\mathcal{V}_C))$} \\
    \midrule
    Propagation &  & {\footnotesize $O(\sum_{i}^k (\mathcal{E}_{B_i} + \mathcal{V}_{W_i}+ \mathcal{E}_{W_i}))$} \\
    \bottomrule
    \end{tabularx}
    }
 	\label{time}
\end{table}

\section{Related Work}
For overlapping community detection, many different approaches have been proposed \cite{xie-2012-survey} including clique percolation, line graph partitioning, eigenvector methods, ego network analysis, and low-rank models. Clique percolation methods look for overlap between fixed size cliques in the graph \cite{palla-2005-overlapping}. Line graph partitioning is also known as link communities. Given a graph $G=(\mathcal{V}, \mathcal{E})$, the line graph of $L(G)$ (also called the dual graph) has a vertex for each edge in $G$ and an edge whenever two edges (in $G$) share a vertex. For instance, the line graph of a star is a clique. A partitioning of the line graph induces an overlapping clustering in the original graph \cite{ahn-2010-line}. Even though these clique percolation and line graph partitioning methods are known to be useful for finding meaningful overlapping structures, these methods often fail to scale to large networks like those we consider. 


Eigenvector methods generalize spectral methods and use a soft clustering scheme applied to eigenvectors of the normalized Laplacian or modularity matrix in order to estimate communities \cite{zhang-2007-fuzzy}. Ego network analysis methods use the theory of structural holes \cite{burt-1995-structural}, and compute and combine many communities through manipulating ego networks \cite{rees-2010-collective}, \cite{coscia-2012-demon}. We compare against the Demon method \cite{coscia-2012-demon} that uses this strategy. We also note that other low-rank methods such as non-negative matrix factorizations identify overlapping communities as well. We compare against the Bigclam method \cite{yang-2013-bigclam} that uses this approach.

The approach we employ is called local optimization and expansion \cite{xie-2012-survey}. Starting from a seed, such a method greedily expands a community around that seed until it reaches a local optima of the community detection objective. Determining how to seed a local expansion method is, arguably, a critical problem within these methods. Strategies to do so include using maximal cliques \cite{shen-2009-eagle}, prior information \cite{gargi-2011-youtube}, or locally minimal neighborhoods \cite{gleich-2012-neighborhoods}. The latter method was shown to identify the vast majority of good conductance sets in a graph; however, there was no provision made for total coverage of all vertices.

Different optimization objectives and expansion methods can be used in a local expansion method. For example, Oslom \cite{lancichinetti-2011-oslom} tests the statistical significance of clusters with respect to a random configuration during community expansion. Starting from a randomly picked node, the Oslom method greedily expands the cluster by checking whether the expanded community is statistically significant or not, which results in detecting a set of overlapping clusters and  outliers in a graph. We compare our method with the Oslom method in our experiments (see Section \ref{sec:exp}).

In our algorithm, we use a personalized PageRank based cut finder \cite{andersen-2006-local} for the local expansion method. Abrahao et al. \cite{abrahao-2012-ground} observe that the structure of real-world communities can be well captured by the random-walk-based algorithms, i.e., personalized PageRank clusters are topologically similar to real-world clusters. More recently, Kloumann and Kleinberg \cite{kloumann-2014-membership} propose to use pure PageRank scores instead of the Fiedler PageRank scores to get a higher accuracy in terms of matching with ground-truth communities. 

\begin{table*}[t] 
{\scriptsize
\caption{Returned number of clusters and graph coverage of each algorithm}
\centering
\begin{tabularx}{\textwidth}{XXXXXXX}
\toprule
Graph & & oslom & demon & bigclam & nise-sph-fppr & nise-grc-fppr \\ \midrule
HepPh &  coverage (\%) & 100 & 88.83 & 84.37 & 100 & 100 \\ 
&  no. of clusters  & 608 & 5,147 & 100 & 99 & 90 \\ 
\midrule
AstroPh &  coverage (\%) & 100 & 94.15 & 91.11 & 100 & 100 \\ 
&  no. of clusters  & 1,241 & 8,259 & 200 & 212 & 246 \\ 
\midrule
CondMat &  coverage (\%) & 100 & 91.16 & 99.96 & 100 & 100 \\ 
&  no. of clusters  & 1,534 & 10,474 & 200 & 201 & 249 \\ 
\midrule
Flickr &  coverage (\%) & N/A & N/A & 52.13 & 93.60 & 100 \\  
&  no. of clusters  & N/A & N/A & 15,000 & 15,349 & 16,347 \\ 
\midrule
LiveJournal &  coverage (\%) & N/A & N/A & 43.86 & 99.78 & 99.79 \\ 
&  no. of clusters  & N/A & N/A & 15,000 & 15,058 & 16,271 \\ 
\midrule
Myspace &  coverage (\%) & N/A & N/A & N/A & 99.87 & 100 \\ 
&  no. of clusters  & N/A & N/A & N/A & 15,324 & 16,366 \\
\midrule
DBLP &  coverage (\%) & 100 & 84.89 & 100 & 100 & 100 \\ 
&  no. of clusters  & 17,519 & 174,560 & 25,000 & 26,503 & 18,477 \\ 
\midrule
Amazon &  coverage (\%) & 100 & 79.16 & 100 & 100 & 100 \\ 
&  no. of clusters  & 17,082 & 105,685 & 25,000 & 27,763 & 20,036 \\ 
\midrule
Orkut &  coverage (\%) & N/A & N/A & 82.13 & 99.99 & 100 \\ 
&  no. of clusters  & N/A & N/A & 25,000 & 25,204 & 32,622 \\
\midrule
LiveJournal2 &  coverage (\%) & N/A & N/A & 56.64 & 99.95 & 99.99 \\ 
&  no. of clusters  & N/A & N/A & 25,000 & 25,065 & 32,274 \\   
\bottomrule
\end{tabularx}
\label{cov_table}
}
\end{table*}

A preliminary version of this work has appeared in \cite{whang-2003-sse}. In this paper, we provide technical details about neighborhood inflation in our seed expansion phase, and include additional experimental results to show the importance of the neighborhood inflation step. Also, we test and compare the performance of the Fiedler PageRank and the standard PPR in our expansion phase. We also improve the implementation of our algorithm in that we try expanding seeds in parallel using multiple threads.

\section{Experimental Results}
\label{sec:exp}
\vspace{-1.1cm}
\begin{center}
\begin{figure*}[ht]
\begin{minipage}[b]{1\linewidth}\centering
  \centering
  \begin{tabular}{ccccc}
    \subfloat[HepPh]{\includegraphics[width=0.2\textwidth]{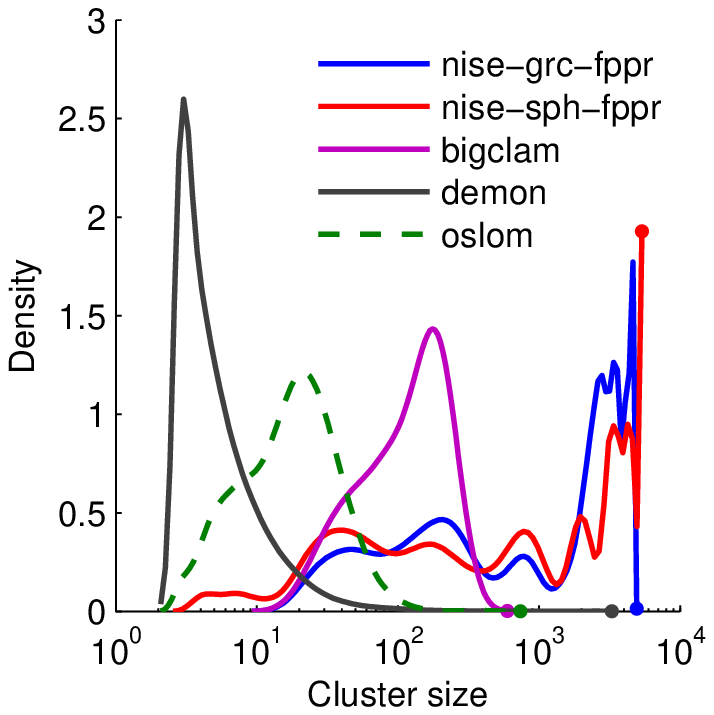}}
    \subfloat[AstroPh]{\includegraphics[width=0.2\textwidth]{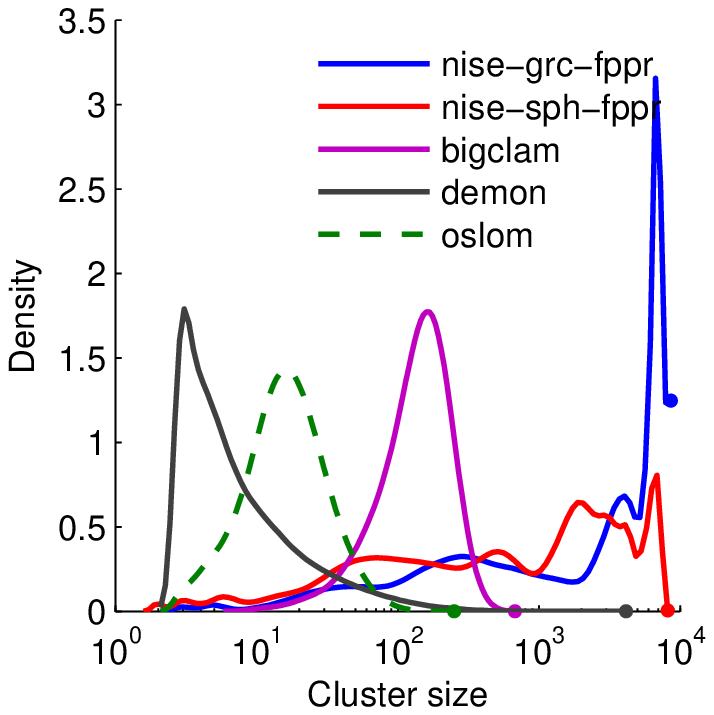}} 
    \subfloat[CondMat]{\includegraphics[width=0.2\textwidth]{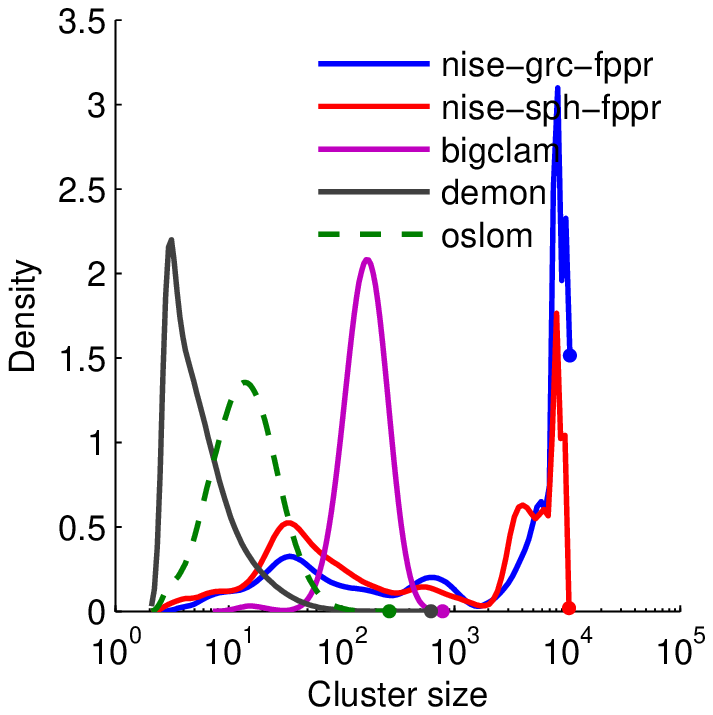}}
    \subfloat[DBLP]{\includegraphics[width=0.2\textwidth]{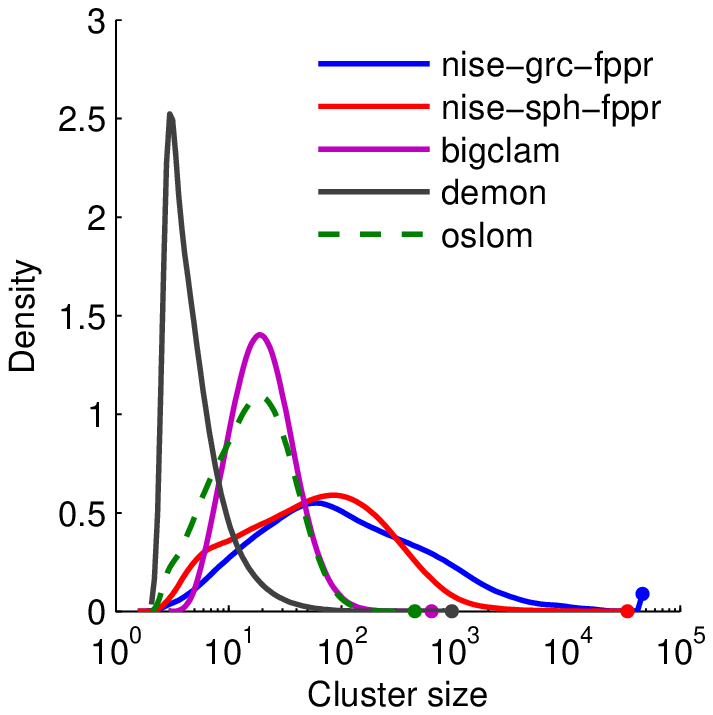}}
    \subfloat[Amazon]{\includegraphics[width=0.2\textwidth]{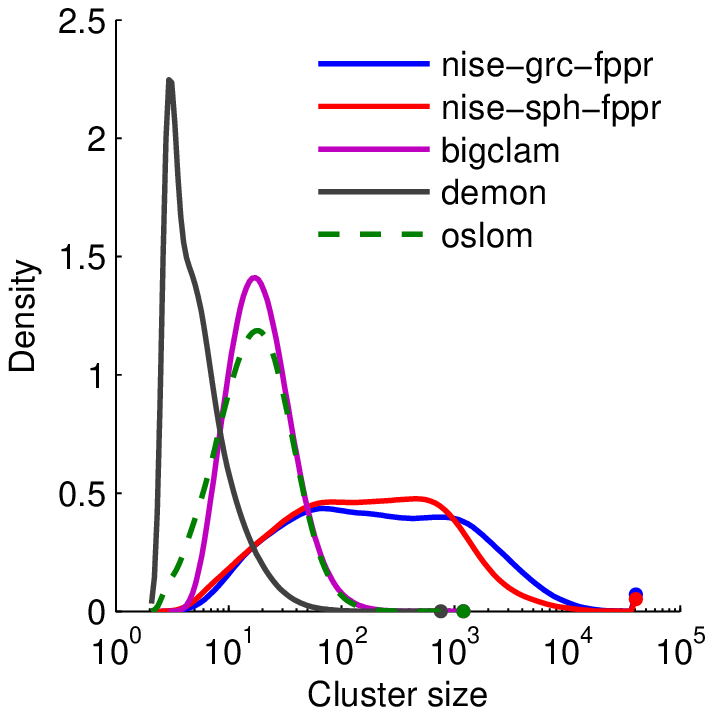}} 
  \end{tabular}
\end{minipage} 
\begin{minipage}[b]{1\linewidth}\centering
  \centering
  \begin{tabular}{ccccc}
    \subfloat[Orkut]{\includegraphics[width=0.2\textwidth]{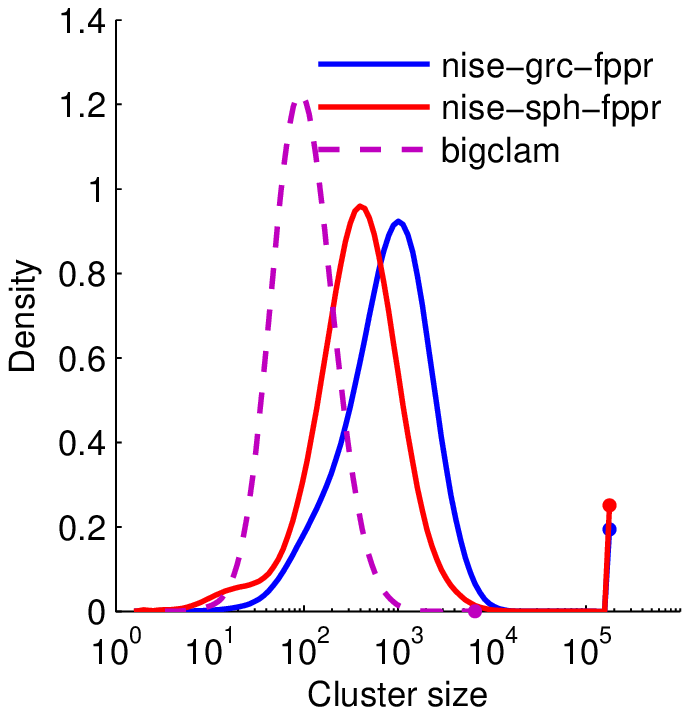}}
    \subfloat[Flickr]{\includegraphics[width=0.2\textwidth]{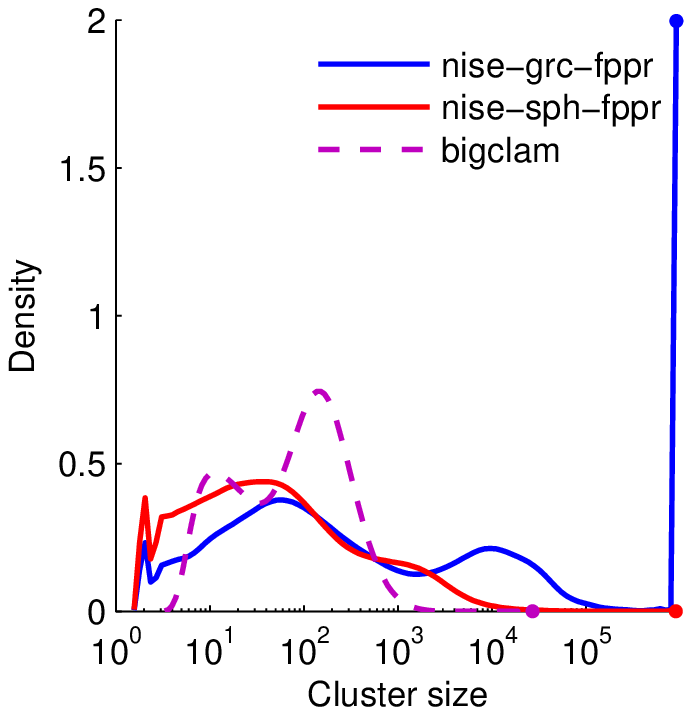}}
    \subfloat[LiveJournal]{\includegraphics[width=0.2\textwidth]{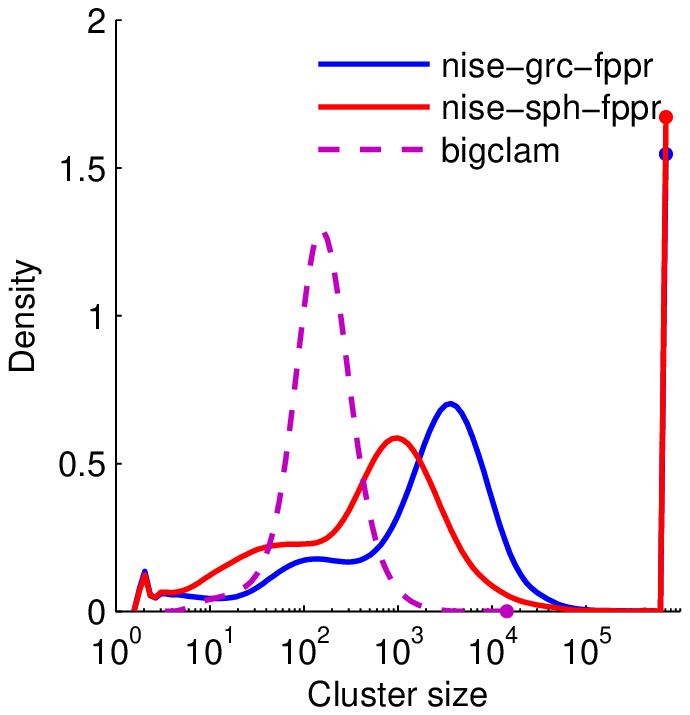}} 
    \subfloat[LiveJournal2]{\includegraphics[width=0.2\textwidth]{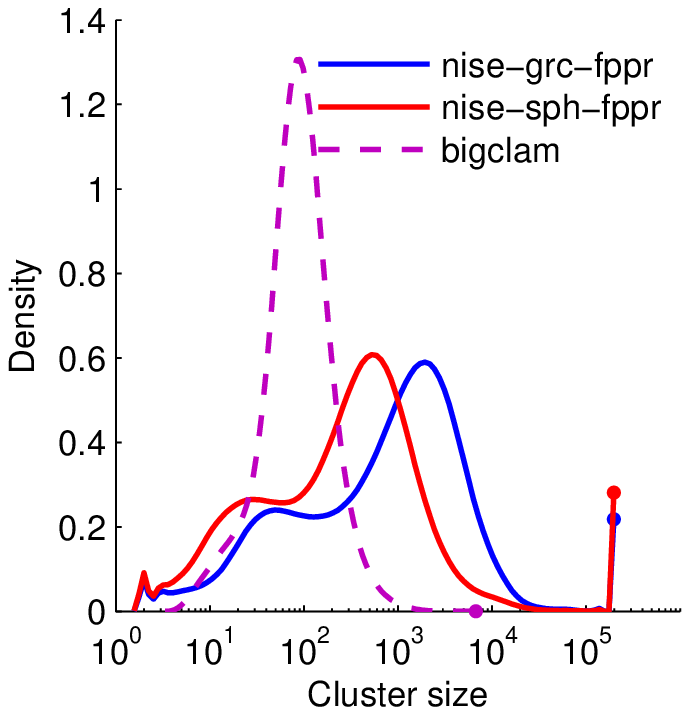}}
    \subfloat[Myspace]{\includegraphics[width=0.2\textwidth]{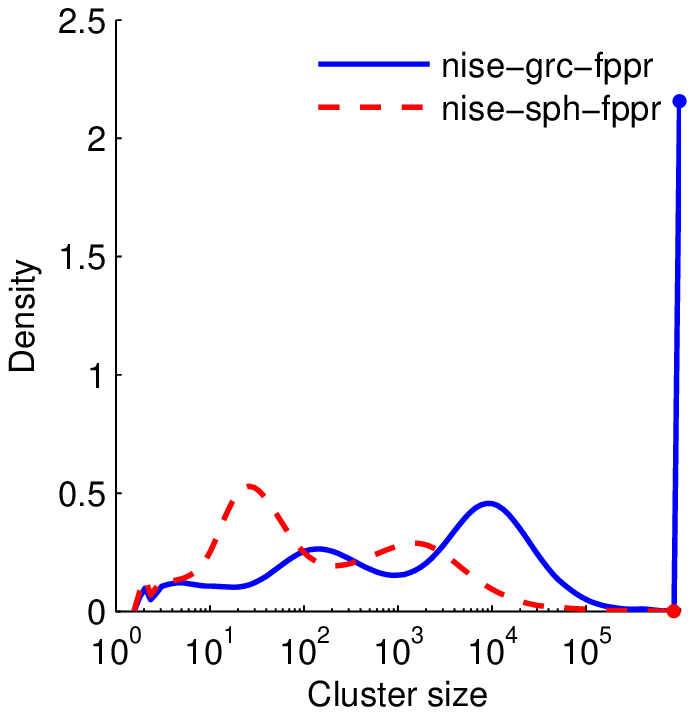}} 
  \end{tabular}
\end{minipage} 
\caption{Distributions of cluster sizes from the methods. These plots show a kernel density smoothed histogram of the cluster sizes from each method. The horizontal axis is the cluster size and the vertical axis is proportional to the number of clusters of that size.}
\label{csizes}
\end{figure*}
\end{center}

We compare our algorithm, \NISE, with other state-of-the-art overlapping community detection methods: Bigclam \cite{yang-2013-bigclam}, Demon \cite{coscia-2012-demon}, and Oslom \cite{lancichinetti-2011-oslom}. For these three methods, we used the software which is provided by the authors of \cite{yang-2013-bigclam}, \cite{coscia-2012-demon}, and \cite{lancichinetti-2011-oslom} respectively. While Demon and Oslom only support a sequential execution, Bigclam supports a multi-threaded execution. \NISE is written in a mixture of C++ and MATLAB. In \NISE, seeds can be expanded in parallel, and this feature is implemented using parallel computing toolbox provided by MATLAB. We compare the performance of each of these methods on ten different real-world networks which are presented in Section \ref{sec:data}.

Within \NISE, we also compare the performance of different seeding strategies and some variants of expansion methods. We use four different seeding strategies: ``graclus centers'' (denoted by ``nise-grc-*'') and ``spread hubs'' (denoted by ``nise-sph-*'') which are proposed in this manuscript, ``locally minimal neighborhoods'' (denoted by ``nise-lcm-*'') which has been proposed in \cite{gleich-2012-neighborhoods}, and random seeding strategy (denoted by ``nise-rnd-*'') where we randomly take $k$ seeds. Andersen and Lang \cite{andersen-2006-seed} have provided some theoretical justification for why random seeding also should be competitive. On the other hand, we also compare two different expansion methods: the Fiedler Personalized PageRank (denoted by ``nise-*-fppr''), and the standard Personalized PageRank (denoted by ``nise-*-ppr''). 

\subsection{Graph Coverage and Community Sizes}
We first report the returned number of clusters and the graph coverage of each algorithm in Table~\ref{cov_table}. The graph coverage indicates how many vertices are assigned to clusters (i.e., the number of assigned vertices divided by the total number of vertices in a graph). Note that we can control the number of seeds $k$ in \NISE and the number of clusters $k$ in Bigclam. We set $k$ (in our methods and Bigclam) as 100 for HepPh, 200 for AstroPh and CondMat, 15,000 for Flickr, Myspace, and LiveJournal, and 25,000 for DBLP, Amazon, LiveJournal2, and Orkut networks without any tuning and using the guidance that larger graphs can have more clusters. For the networks where we have ground-truth communities, we slightly overestimate the number of clusters $k$ since there usually exists a large number of ground-truth communities. Since we remove duplicate clusters after the PageRank expansion in \NISE, the returned number of clusters can be smaller than $k$. Also, since we choose all the tied seeds in ``graclus centers'' and ``spread hubs'', the returned number of clusters of these algorithms can be slightly larger than $k$. Recall that we use a top-down hierarchical clustering scheme in the ``graclus centers'' strategy. So, in this case, the returned number of clusters before filtering the duplicate clusters is slightly greater than or equal to $2^{\lceil \log k  \rceil}$. On the other hand, Demon and Oslom determine the number of clusters based on datasets. Demon and Oslom fail on Flickr, Myspace, LiveJournal, LiveJournal2, and Orkut. Bigclam does not finish on the Myspace network (using 4 threads) after running for one week.

Figure \ref{csizes} shows distributions of cluster sizes. These figures show that the \NISE method tends to find larger clusters than the other methods, usually about 10 to 100 times as large. Also, the \NISE method often finds a number of large clusters---these are the spikes on the right for subfigures (f)--(j). This tends to happen slightly more often for the ``graclus centers'' seeding strategy. The other observation is that \NISE tends to produce more variance in the sizes of the clusters than the other methods and the resulting histograms are not as sharply peaked.

\begin{center}
\begin{figure*}[t]
\centering
  \begin{tabular}{ccc}
    \subfloat[LiveJournal]{\includegraphics[width=0.32\textwidth]{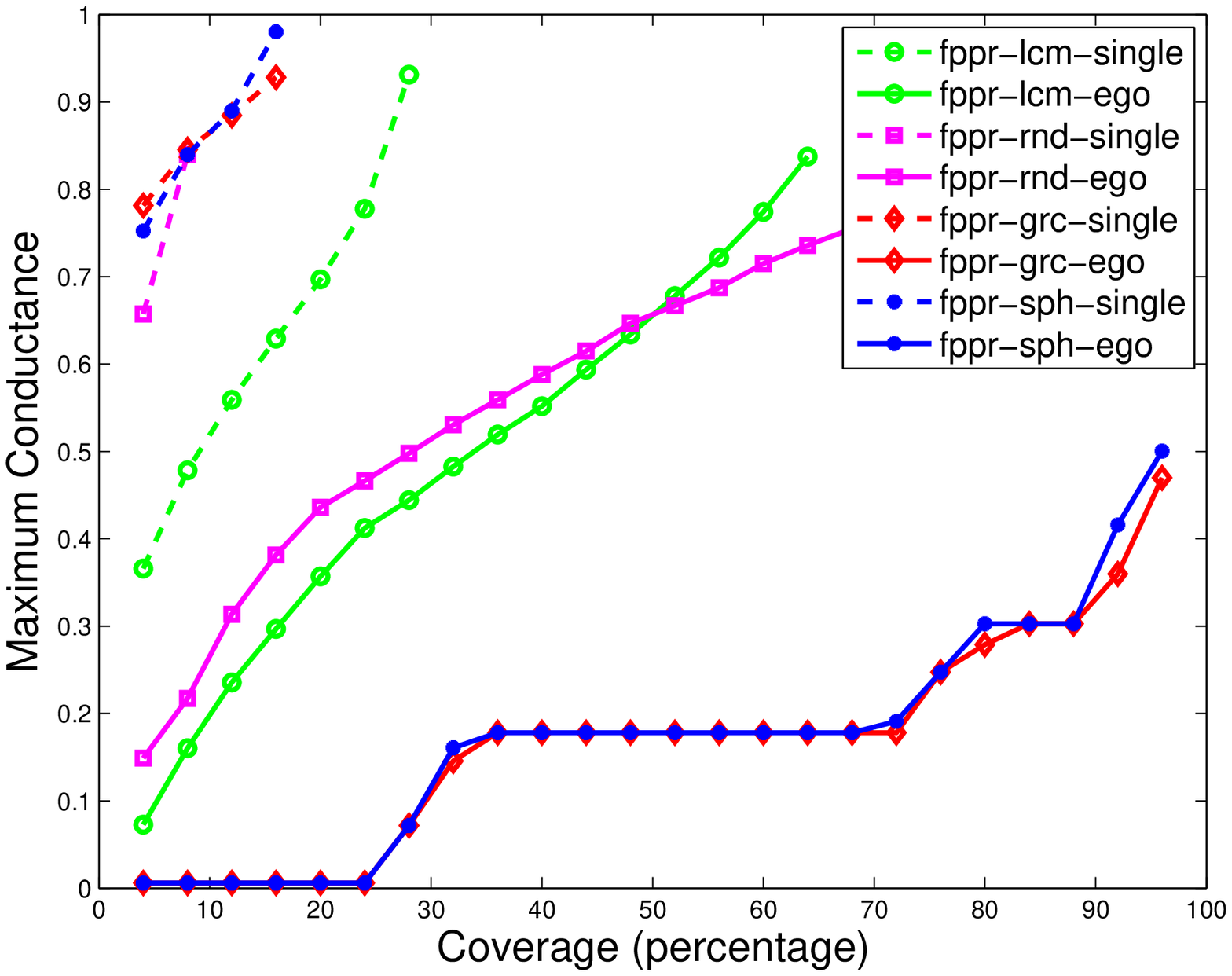}}&
    \subfloat[Myspace]{\includegraphics[width=0.32\textwidth]{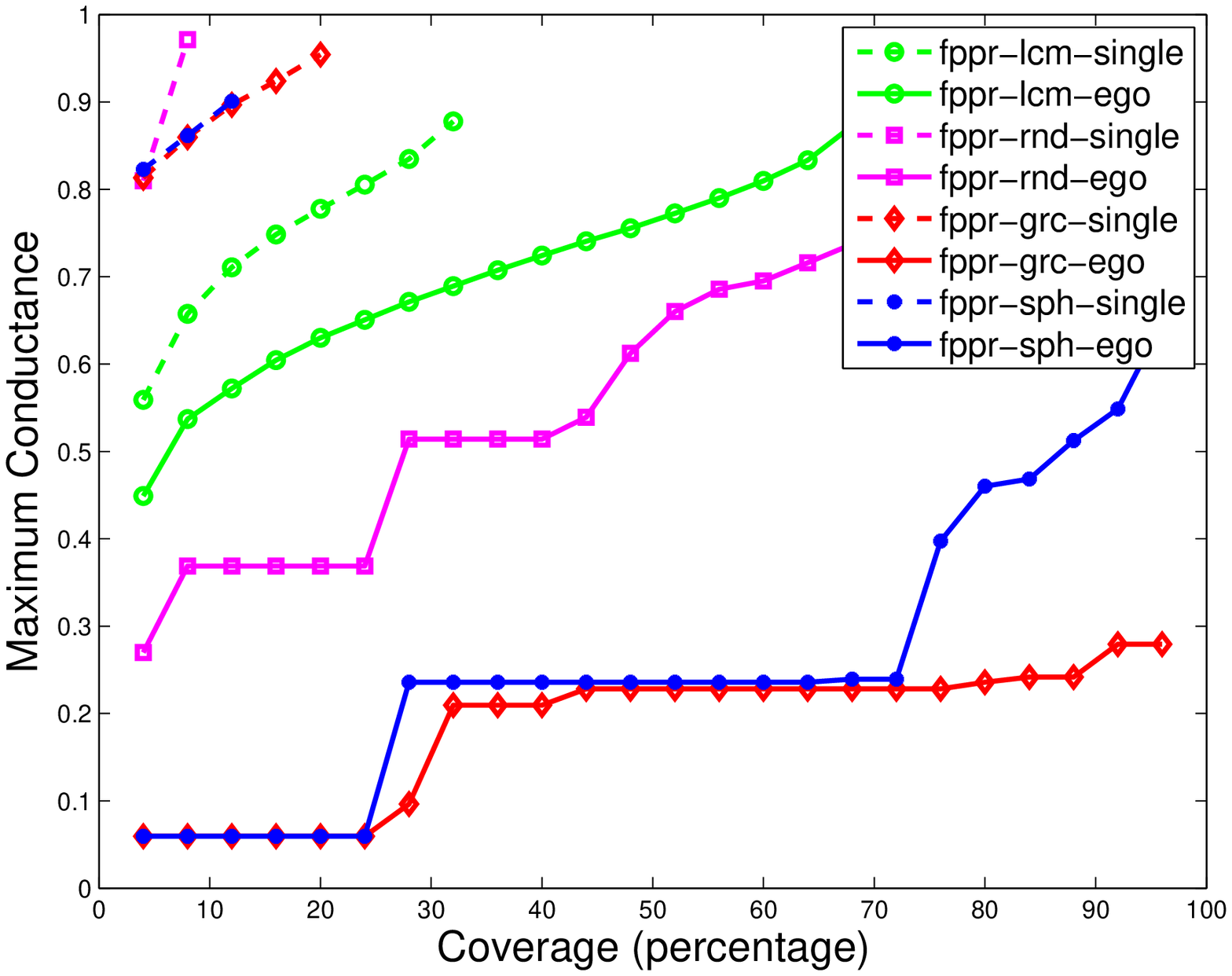}}
     \subfloat[Flickr]{\includegraphics[width=0.32\textwidth]{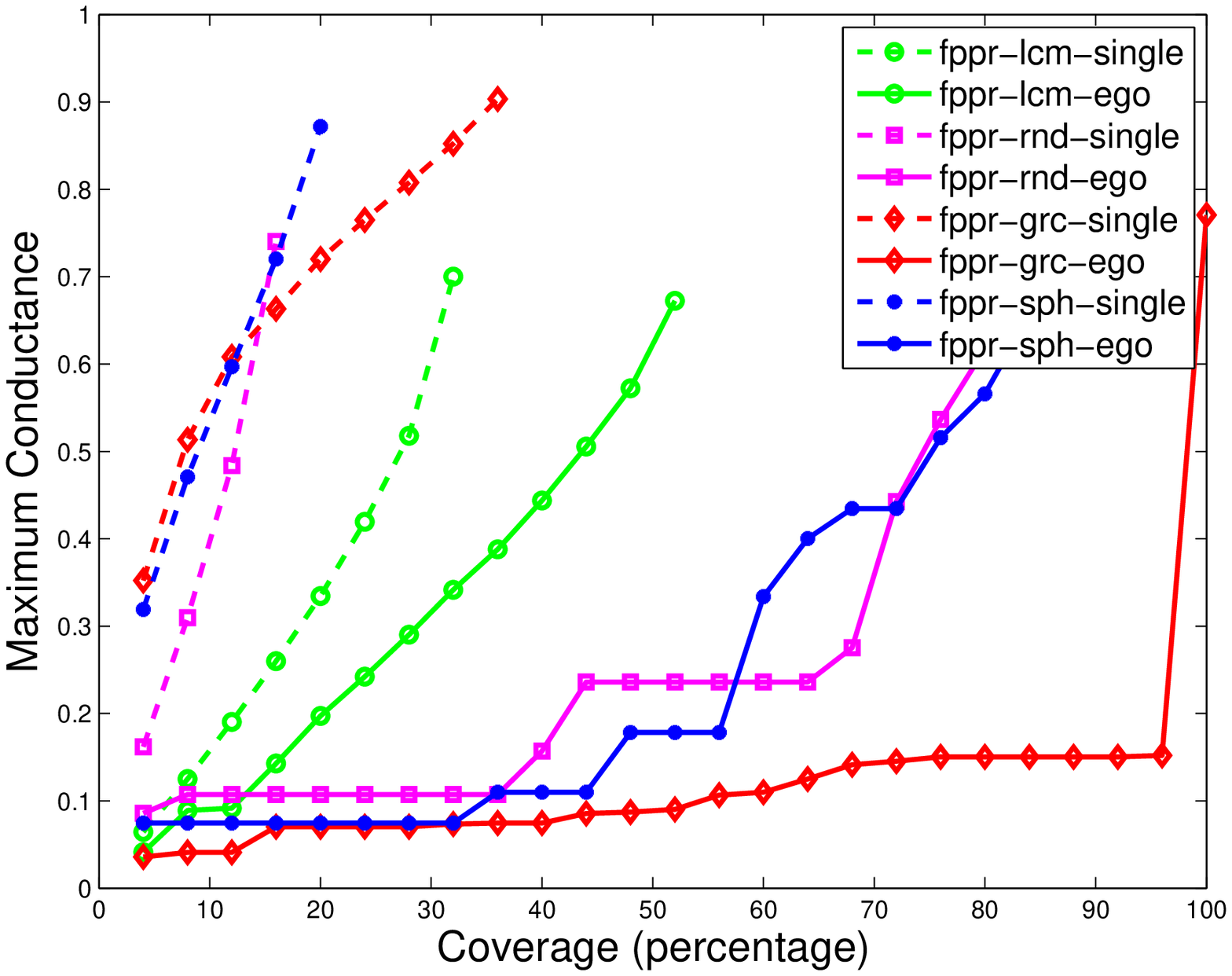}}&
  \end{tabular}
\caption{Importance of neighborhood inflation -- there is a large performance gap between singleton seeds and neighborhood-inflated seeds for all the seeding strategies. Neighborhood inflation plays a critical role in the success of \NISE. When neighborhood-inflated seeds are used, ``graclus centers'' and ``spread hubs'' seeding strategies significantly outperform other seeding strategies.}
\label{neigh_infl}
\end{figure*}
\end{center}

\vspace{-1cm}

\subsection{Importance of Neighborhood-Inflation}
We evaluate the quality of overlapping clustering in terms of the maximum conductance of any cluster. A high quality algorithm should return a set of clusters that covers a large portion of the graph with small maximum conductance. This metric can be presented by a conductance-vs-coverage curve. That is, for each method, we first sort the clusters according to the conductance scores in ascending order, and then greedily take clusters until a certain percentage of the graph is covered. The $x$-axis of each plot is the graph coverage, and the $y$-axis is the maximum conductance value among the clusters we take. We can interpret this plot as follows: we need to use clusters whose conductance scores are less than or equal to $y$ to cover $x$ percentage of the graph. Note that lower conductance indicates better quality of clusters, i.e., a lower curve indicates better clusters. 

First, we verify the importance of \textit{neighborhood inflation} in our seed expansion phase. Recall that when we compute the personalized PageRank (PPR) score for each seed node, we use the seed node's entire vertex neighborhood (the vertex neighborhood is also referred to as ``ego network'') as the restart region in PPR (details are in Section \ref{sec:ppr}). To see how this affects the overall performance of the seed expansion method, we compare the performance of singleton seeds and neighborhood-inflated seeds. Figure~\ref{neigh_infl} shows the conductance-vs-coverage plot for singleton seeds and neighborhood-inflated seeds. ``*-single'' indicates singleton seeds, i.e., each seed is solely used as the restart region in PPR. ``*-ego'' indicates neighborhood-inflated seeds. We also used four different seeding strategies: ``graclus centers'' (denoted by ``grc-*''), ``spread hubs'' (denoted by ``sph-*''), ``locally minimal neighborhoods'' (denoted by ``lcm-*''), and ``random'' (denoted by ``rnd-*''). 

We can see that the performance significantly degrades when singleton seeds are used for all the seeding strategies. This implies that neighborhood inflation plays a critical role in the success of our method. Even though we only present the results on LiveJournal, Myspace, and Flickr in Figure~\ref{neigh_infl} for brevity, we consistently observed that neighborhood-inflated seeds are much better than singleton seeds on all other networks. We also notice that that when neighborhood-inflated seeds are used, both ``graclus centers'' and ``spread hubs'' seeding strategies significantly outperform other seeding strategies. ``spread hubs'' and ``graclus centers'' seeding strategies produce similar results on LiveJournal whereas ``graclus centers'' is better than ``spread hubs'' on Myspace and Flickr. We used the conventional Fiedler PPR for the expansion phase in Figure~\ref{neigh_infl}, but we also got the same conclusion using the standard PPR.

\subsection{Community Quality Using Conductance}
We compute AUC (Area Under the Curve) of the conductance-vs-coverage to compare the performance of \NISE with other state-of-the-art methods. Within \NISE, we also compare four different seeding strategies and two different expansion methods. The AUC scores are normalized such that they are between zero and one. 

Figure~\ref{cond_plot} shows AUC scores on the six networks where we do not have ground-truth community information (see Table \ref{input} for details about these networks). We can see several patterns in Figure~\ref{cond_plot}. First, within \NISE, ``graclus centers'' and ``spread hubs'' seeding strategies outperform the other two seeding strategies. Second, for most of the cases, ``fppr'' leads to slightly better communities than ``ppr''. Also, we can see that ``nise-grc-fppr'' shows the best performance for all networks. Third, \NISE outperforms Demon, Oslom, and Bigclam. There is a significant performance gap between \NISE and these methods.

\begin{center}
\begin{figure*}[t]
\centering
\begin{minipage}[b]{1\linewidth}\centering
  \centering
  \begin{tabular}{ccc}
    \subfloat[AstroPh]{\includegraphics[width=0.33\textwidth]{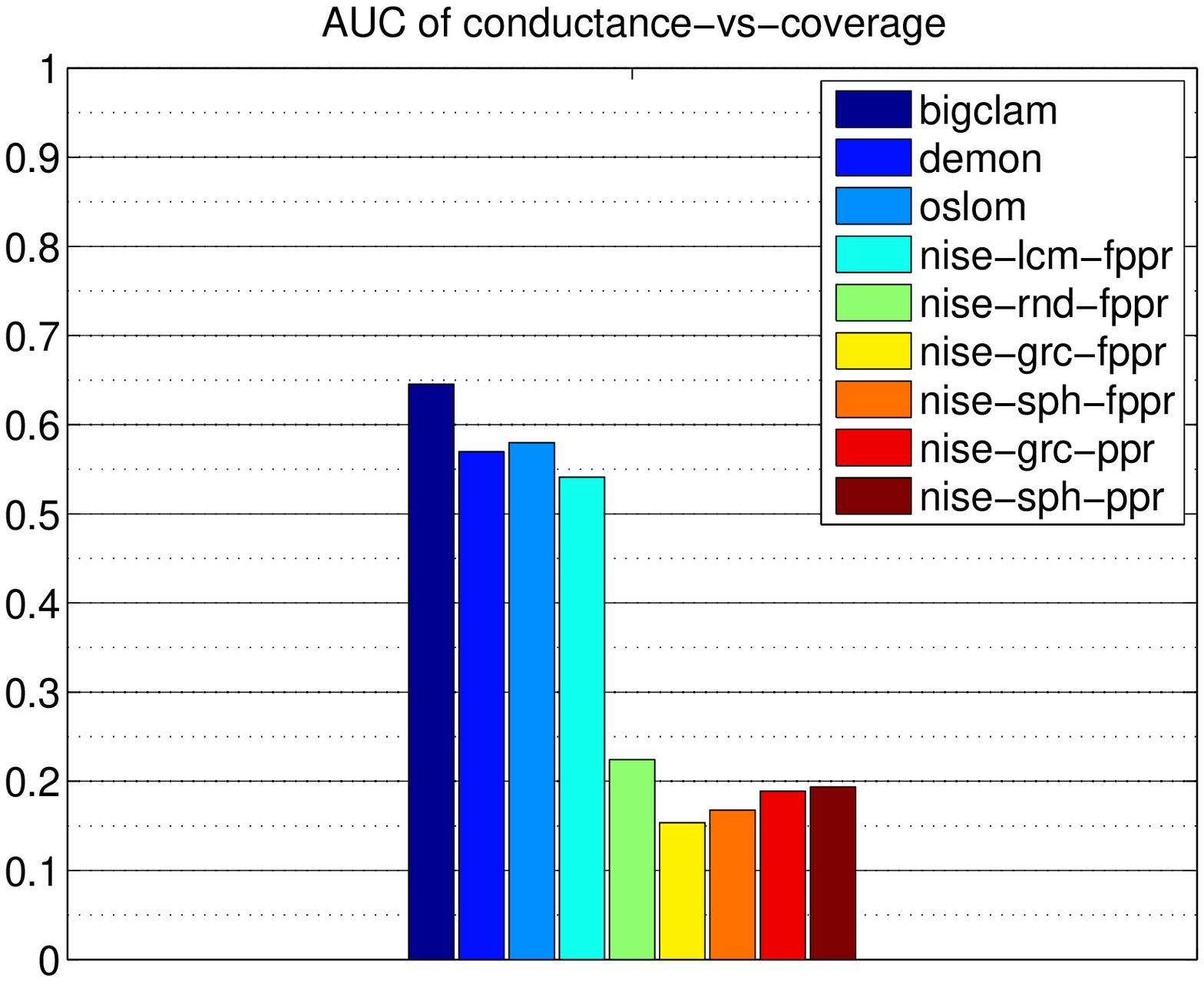}}
    \subfloat[HepPh]{\includegraphics[width=0.33\textwidth]{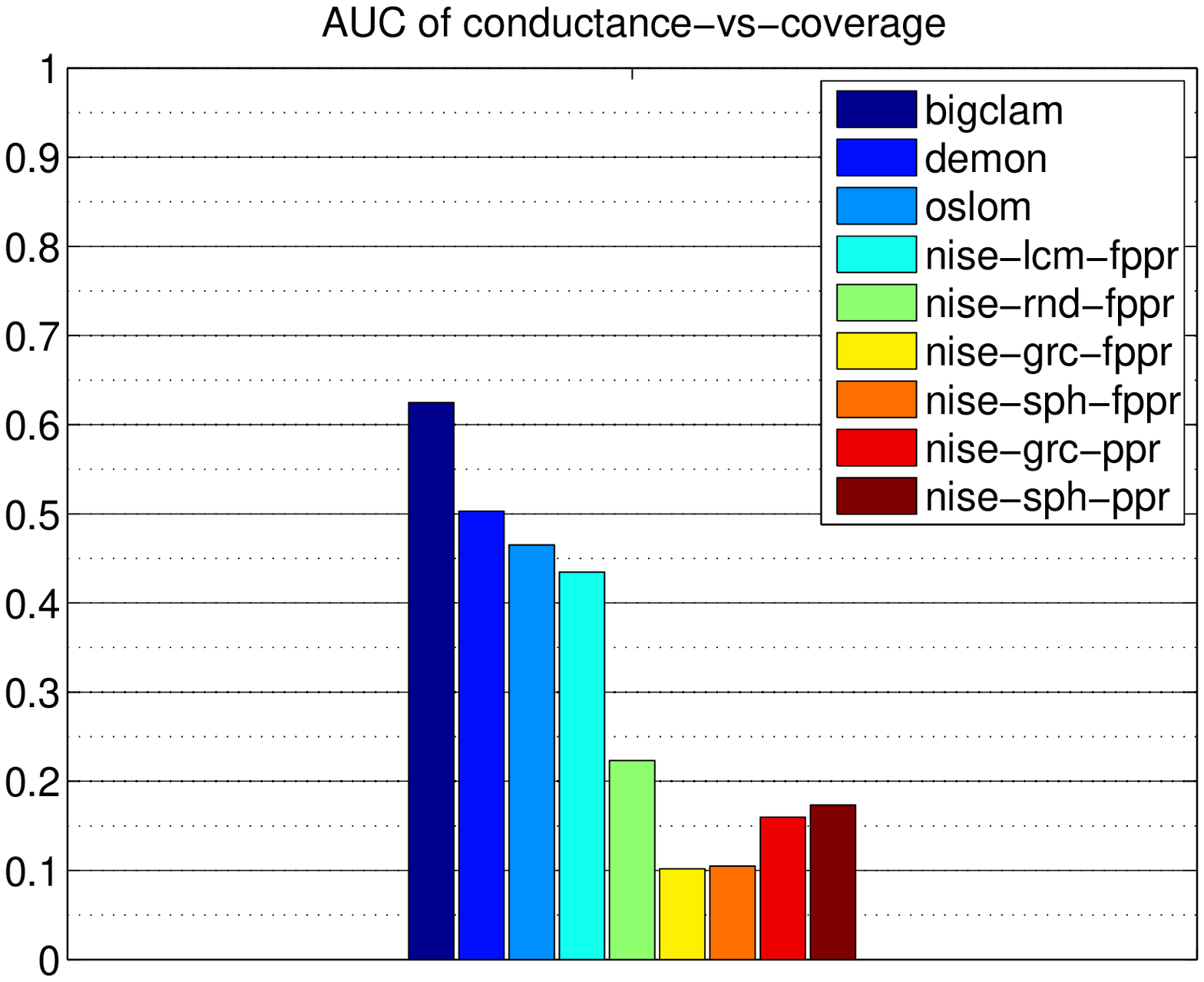}}
    \subfloat[CondMat]{\includegraphics[width=0.33\textwidth]{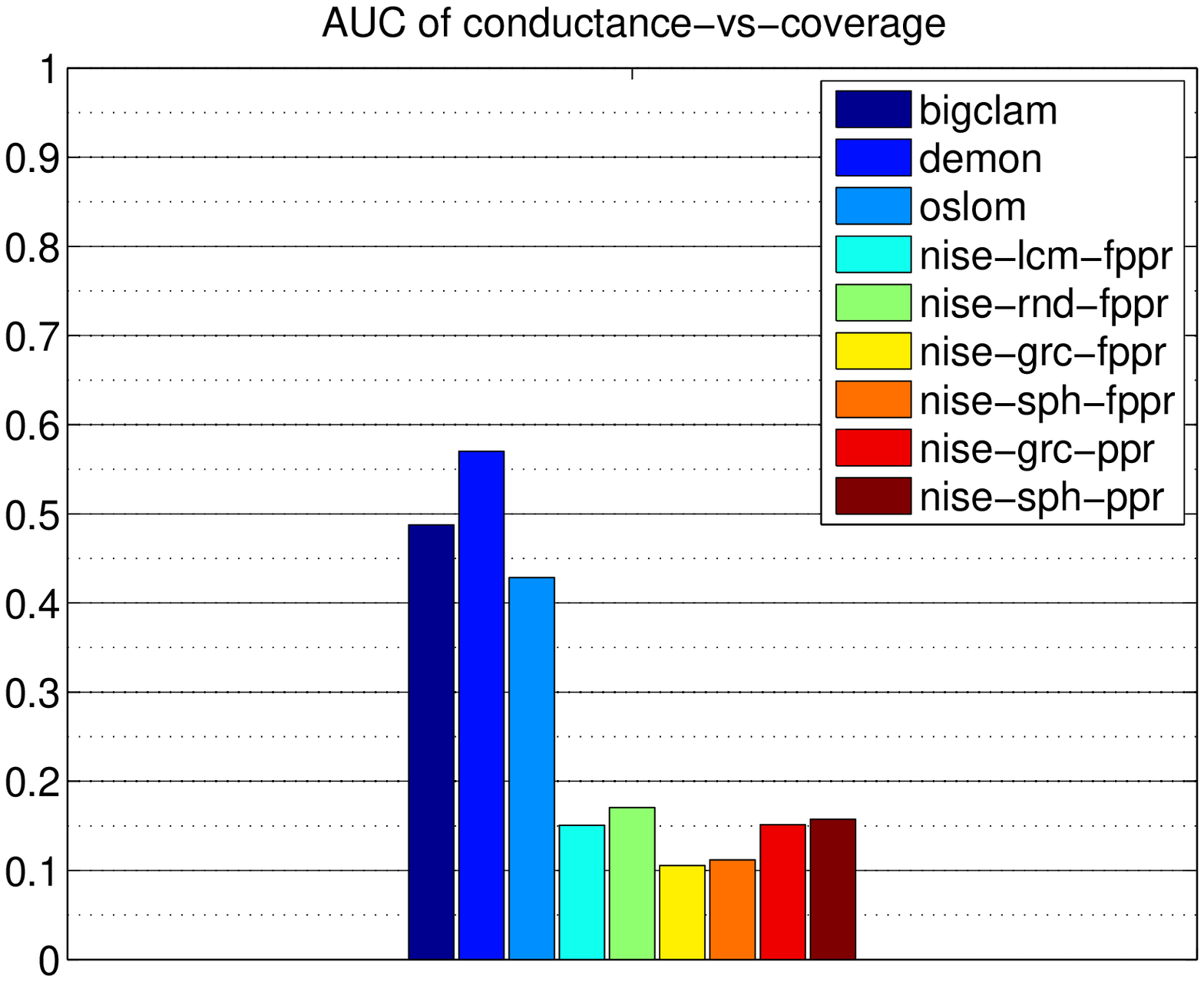}}
  \end{tabular}
\end{minipage}
\begin{minipage}[b]{1\linewidth}\centering
  \centering
  \begin{tabular}{ccc}  	
    \subfloat[Flickr]{\includegraphics[width=0.33\textwidth]{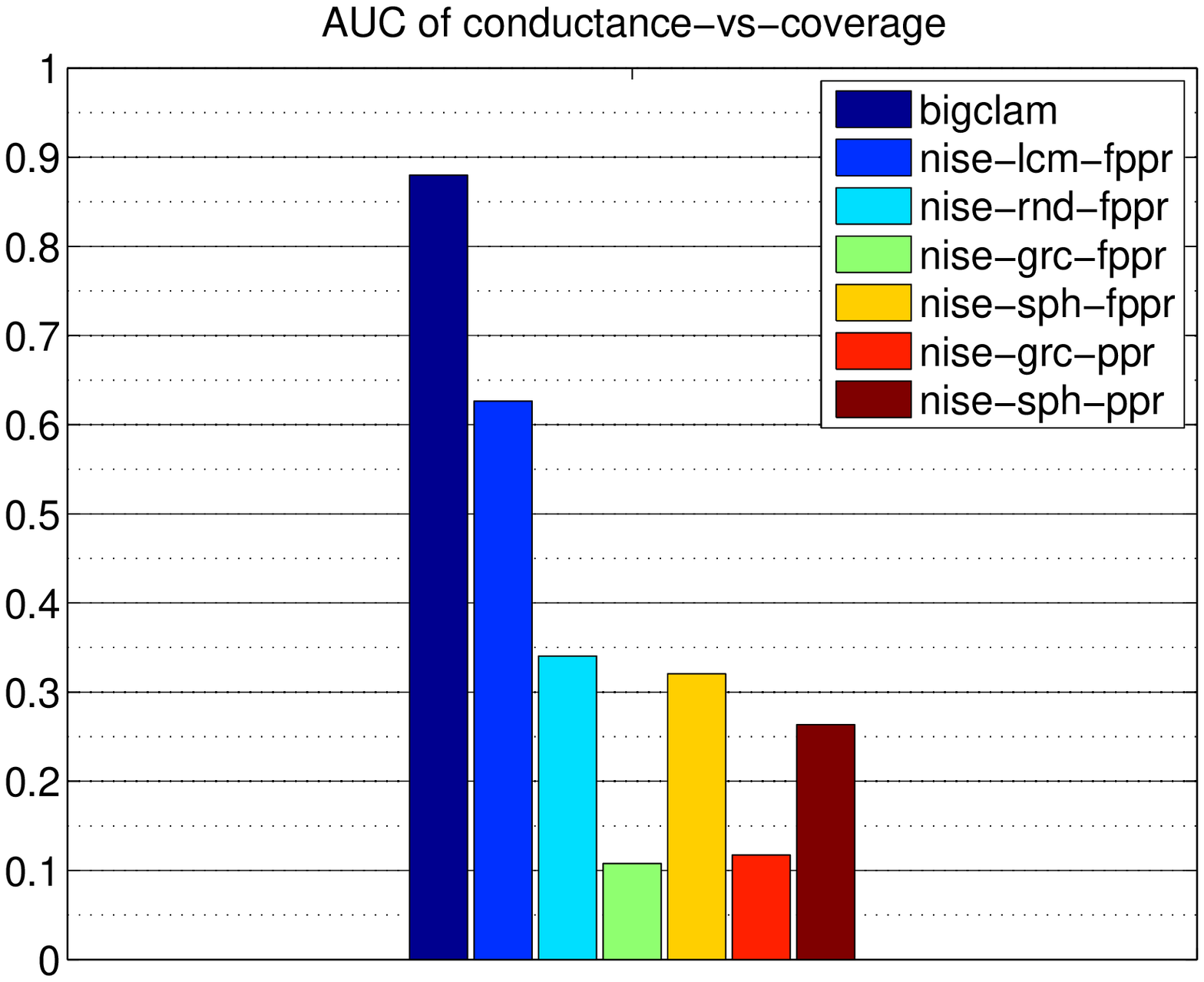}} 
    \subfloat[LiveJournal]{\includegraphics[width=0.33\textwidth]{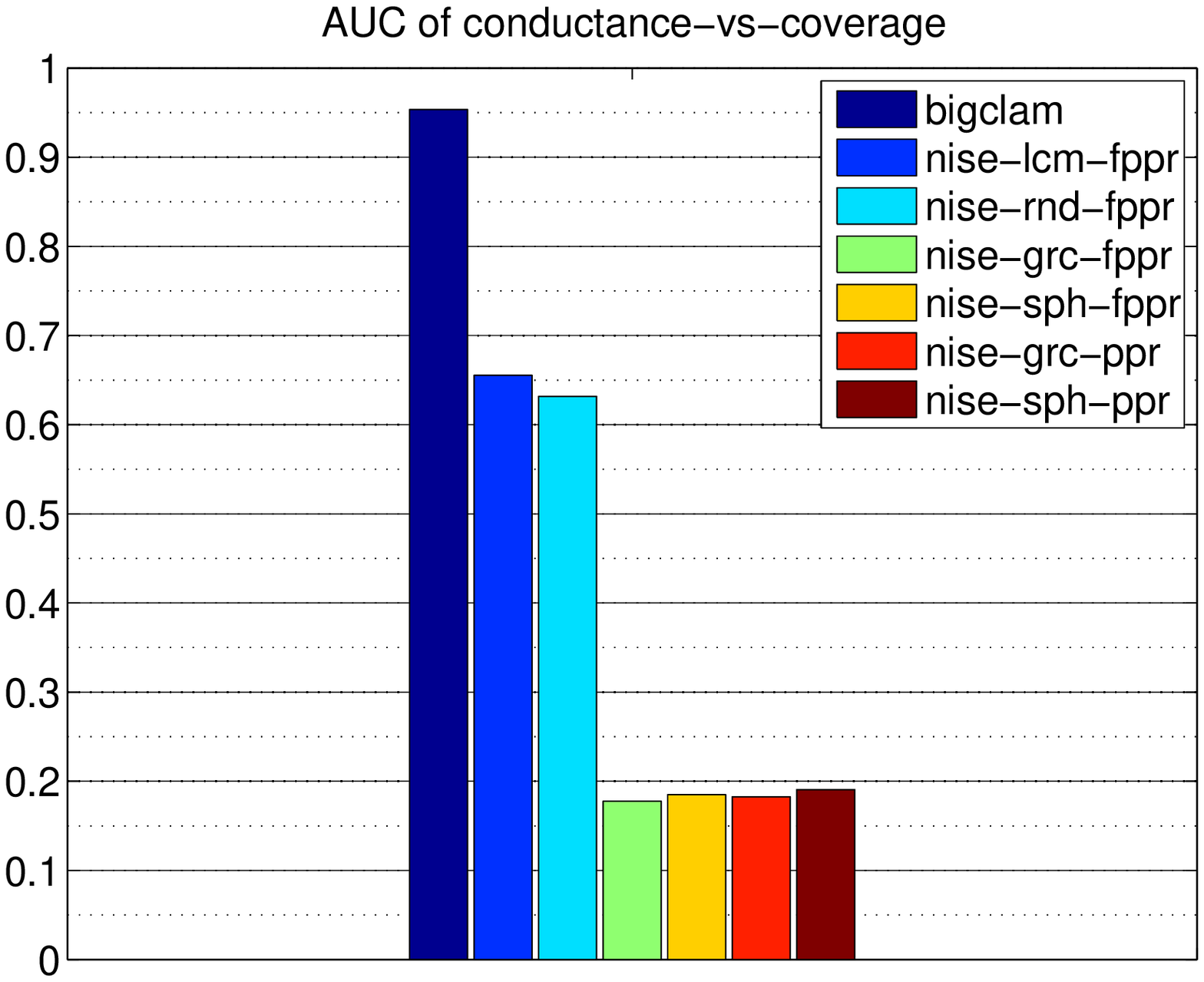}}
    \subfloat[Myspace]{\includegraphics[width=0.33\textwidth]{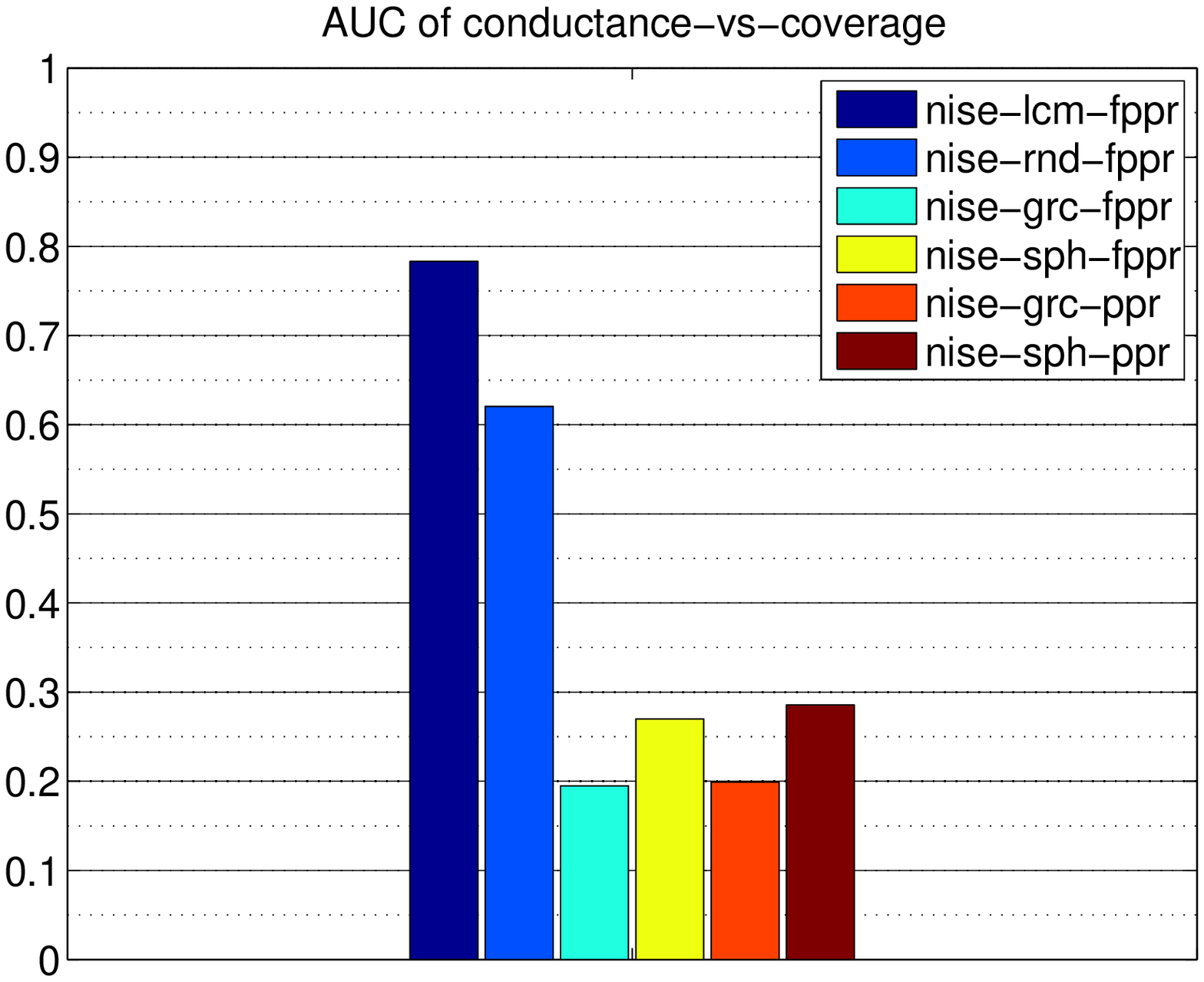}}
  \end{tabular}
\end{minipage}
\caption{AUC of Conductance-vs-coverage -- lower bar indicates better communities. \NISE outperforms Demon, Oslom, and Bigclam. Within \NISE, ``graclus centers'' and ``spread hubs'' seeding strategies are better than other seeding strategies, and the Fiedler PPR produces slightly better communities than the standard PPR.}
\label{cond_plot}
\end{figure*}
\end{center}

\begin{table*}[t] 
{\small
    \caption{F1 and F2 measures.} 
    \centering
    \begin{tabularx}{\linewidth}{lXXXXXXXX}
    \toprule
      & \multicolumn{2}{X}{DBLP} & \multicolumn{2}{X}{Amazon} & \multicolumn{2}{X}{LiveJournal2} & \multicolumn{2}{X}{Orkut} \\ 
      & $F_1$ & $F_2$ & $F_1$ & $F_2$ & $F_1$ & $F_2$ & $F_1$ & $F_2$ \\ \hline
    bigclam & 15.1 \% & 13.0 \% & 27.1 \% & 25.6 \% & 11.3 \% & 13.7 \% & 43.0 \% & 47.4 \% \\
    demon & 13.7 \% & 12.0 \% & 16.5 \% & 15.3 \% & N/A & N/A & N/A & N/A \\
    oslom & 13.4 \% & 11.6 \% & 32.0 \% & 30.2 \% & N/A & N/A & N/A & N/A \\
    nise-lcm-fppr & 13.9 \% & 15.4 \% & 46.3 \% & 56.5 \% & 11.3 \% & 13.8 \% & 40.9 \% & 46.8 \% \\
    nise-rnd-fppr & 17.7 \% & 20.5 \% & 48.9 \% & 58.8 \% & 12.1 \% & 16.5 \% & 54.6 \% & 62.9 \% \\
    nise-sph-fppr & 18.1 \% & 21.4 \% & 49.2 \% & {\bf 59.5} \% & 12.7 \% & {\bf 18.1} \% & 55.1 \% & 64.2 \% \\
    nise-sph-ppr & {\bf 19.0 \%} & {\bf 22.6 \%} & {\bf 49.7} \% & 58.7 \% & {\bf 12.8} \% & {\bf 18.1} \% & {\bf 57.4} \% & {\bf 65.2 \%} \\
    nise-grc-fppr & 17.6 \% & 21.7 \% & 46.7 \% & 57.1 \% & 12.2 \% & 17.6 \% & 51.1 \% & 61.4 \% \\
    nise-grc-ppr & 17.6 \% & 22.0 \% & 47.3 \% & 56.0 \% & {\bf 12.8} \% & 17.6 \% & 53.5 \% & 62.4 \% \\    
    \bottomrule
    \end{tabularx}
 	\label{ground_plot}
}
\end{table*}

\vspace{-1cm}
\subsection{Community Quality via Ground-truth}
We have ground-truth communities for the DBLP, Amazon, LiveJournal2, and Orkut networks, thus, for these networks, we compare against the ground-truth communities. Given a set of algorithmic communities $\mathcal{C}$ and the ground-truth communities $\mathcal{S}$, we compute $F_1$ measure and $F_2$ measure to evaluate the relevance between the algorithmic communities and the ground-truth communities. In general, $F_\beta$ measure is defined as follows:
\begin{displaymath}
F_\beta (\mathcal{S}_i) = (1+\beta^2) \dfrac{\text{\emph{precision}}(\mathcal{S}_i) \cdot \text{\emph{recall}}(\mathcal{S}_i)}{\beta^2 \cdot \text{\emph{precision}}(\mathcal{S}_i) + \text{\emph{recall}}(\mathcal{S}_i)}
\end{displaymath}
where $\beta$ is a non-negative real value, and the $\text{\emph{precision}}$ and $\text{\emph{recall}}$ of $\mathcal{S}_i \in \mathcal{S}$ are defined as follows:
\begin{displaymath}
\text{\emph{precision}}(\mathcal{S}_i) = \dfrac{|\mathcal{C}_j \bigcap \mathcal{S}_i|}{|\mathcal{C}_j|},
\end{displaymath}
\begin{displaymath}
\text{\emph{recall}}(\mathcal{S}_i) = \dfrac{|\mathcal{C}_j \bigcap \mathcal{S}_i|}{|\mathcal{S}_i|},
\end{displaymath}
where $\mathcal{C}_j \in \mathcal{C}$, and $F_\beta (\mathcal{S}_i) = F_\beta (\mathcal{S}_i, \mathcal{C}_{j^*})$ where $j^* = \underset{j}{\operatorname{argmax}}$ $F_\beta (\mathcal{S}_i, \mathcal{C}_j)$.
Then, the average $F_\beta$ measure is defined to be
\begin{displaymath}
\bar{F_\beta} = \dfrac{1}{|\mathcal{S}|} \sum_{\mathcal{S}_i \in \mathcal{S}} F_\beta(\mathcal{S}_i).
\end{displaymath}

Given an algorithmic community, $\text{\emph{precision}}$ indicates how many vertices are actually in the same ground-truth community. Given a ground-truth community, $\text{\emph{recall}}$ indicates how many vertices are predicted to be in the same community in a retrieved community. By definition, the precision and the recall are evenly weighted in $F_1$ measure. On the other hand, the $F_2$ measure puts more emphasis on recall than precision. The authors in \cite{yang-2013-bigclam} who provided the datasets argue that it is important to quantify the recall since the ground-truth communities in these datasets are partially annotated, i.e., some vertices are not annotated to be a part of the ground-truth community even though they actually belong to that community. This indicates that it would be reasonable to weight recall higher than precision, which is done by the $F_2$ measure.

In Table~\ref{ground_plot}, we report the average $F_1$ and $F_2$ measures on DBLP, Amazon, LiveJournal2, and Orkut networks. A higher value indicates better communities. We see that \NISE outperforms Bigclam, Demon, and Oslom in terms of both $F_1$ and $F_2$ measures on these networks. Within \NISE, ``spread hubs'' seeding is better than ``graclus centers'' seeding, and the standard PPR is slightly better than the Fiedler PPR in most of the cases. So, we see that the standard PPR is useful for identifying ground-truth communities. This result is also consistent with the recent observations in \cite{kloumann-2014-membership}.

\begin{table*}[t] 
{\scriptsize
\caption{Running Times of different methods on our test networks}
\label{runtime_table}
\centering
\begin{tabularx}{\textwidth}{XXXXXX}
\toprule
Graph & oslom & demon & bigclam & nise-sph-fppr & nise-grc-fppr \\ 
\midrule
{HepPh} & 19 mins. 16 secs. & 27 secs. & 11 mins. 23 secs. & 22 secs. & 2 mins. 48 secs. \\ 
{AstroPh} & 38 mins. 3 secs. & 42 secs. & 48 mins. 1 secs. & 36 secs. & 2 mins. 26 secs. \\ 
{CondMat} & 20 mins. 39 secs. & 50 secs. & 7 mins. 21 secs. & 36 secs. & 1 min. 14 secs. \\
{DBLP} & 5 hrs. 50 mins. & 3 hrs. 53 mins. & 7 hrs. 13 mins. & 18 mins. 20 secs. & 29 mins. 44 secs. \\ 
{Amazon} & 2 hrs. 55 mins. & 1 hr. 55 mins. & 1 hr. 25 mins. & 37 mins. 36 secs. & 42 mins. 43 secs. \\ 
{Flickr} & N/A & N/A & 69 hrs. 59 mins. & 43 mins. 55 secs. & 3 hrs. 56 mins. \\ 
{Orkut} & N/A & N/A & 13 hrs. 48 mins. & 1 hrs. 16 mins. & 4 hrs. 16 mins. \\
{LiveJournal} & N/A & N/A & 65 hrs. 30 mins. & 2 hrs. 36 mins. & 4 hrs. 48 mins. \\ 
{LiveJournal2} & N/A & N/A & 21 hrs. 35 mins. & 2 hrs. 15 mins. & 6 hrs. 37 mins. \\ 
{Myspace} & N/A & N/A & $>$ 7 days & 5 hrs. 27 mins. & 9 hrs. 42 mins. \\ \bottomrule  
\end{tabularx}
\label{runtime_table}
}
\end{table*}

\subsection{Comparison of Running Times}
Finally, we compare the running times of the different algorithms in Table~\ref{runtime_table}. To do a fair comparison, we run the single thread version of Bigclam and \NISE for HepPh, AstroPh, CondMat, DBLP, and Amazon networks. Since Demon and Oslom fail on larger networks, we use the multi-threaded version of Bigclam and \NISE with 4 threads for larger networks. We see that \NISE is the only method which can process the largest dataset (Myspace) in a reasonable time. On small networks (HepPh, AstroPh, and CondMat), ``nise-sph-fppr'' is faster than Demon, Oslom and Bigclam. On medium size networks (DBLP and Amazon), both ``nise-grc-fppr'' and ``nise-sph-fppr'' are faster than other methods. On large networks (Flickr, Orkut, LiveJournal, LiveJournal2, Myspace), \NISE is much faster than Bigclam.

\section{Discussion and Conclusion}
We now discuss the results from our experimental investigations. First, we note that \NISE is the only method that worked on all of the problems. Also, our method is faster than other state-of-the-art overlapping community detection methods. Perhaps surprisingly, the major difference in cost between using ``graclus centers'' for the seeds and the other seed choices does not result from the expense of running Graclus. Rather, it arises because the personalized PageRank expansion technique takes longer for the seeds chosen by Graclus. When the PageRank expansion methods has a larger input set, it tends to take longer, and the ``graclus centers'' seeding strategy is likely to produce larger input sets because of the neighborhood inflation and because the central vertices of clusters are likely to be high degree vertices.

We wish to address the relationship between our results and some prior observations on overlapping communities. The authors of Bigclam found that the dense regions of a graph reflect areas of overlap between overlapping communities. By using a conductance measure, we ought to find only these dense regions -- however, our method produces much larger communities that cover the entire graph. The reason for this difference is that we use the entire vertex neighborhood as the restart for the personalized PageRank expansion routine. We avoid seeding exclusively inside a dense region by using an entire vertex neighborhood as a seed, which grows the set beyond the dense region. Thus, the communities we find likely capture a combination of communities given by the ego network of the original seed node. 

Overall, \NISE significantly outperforms other state-of-the-art overlapping community detection methods in terms of runtime, conductance-vs-coverage, and ground-truth accuracy. Also, our new seeding strategies, ``graclus centers'' and ``spread hubs'', are superior than existing methods, thus play an important role in the success of our seed set expansion method.

\section*{Acknowledgments}
This research was supported by NSF grants CCF-1117055 and CCF-1320746 to ID, and by NSF CAREER award CCF-1149756 to DG.

\bibliographystyle{plain}
\bibliography{ref_nise}

\begin{thebibliography}{10}

\bibitem{snap}
{Stanford Network Analysis Project}.
\newblock \url{http://snap.stanford.edu/}.

\bibitem{abrahao-2012-ground}
Bruno Abrahao, Sucheta Soundarajan, John Hopcroft, and Robert Kleinberg.
\newblock On the separability of structural classes of communities.
\newblock In {\em Proceedings of the 18th ACM International Conference on
  Knowledge Discovery and Data mining}, pages 624--632, 2012.

\bibitem{ahn-2010-line}
Yong-Yeol Ahn, James~P. Bagrow, and Sune Lehmann.
\newblock Link communities reveal multiscale complexity in networks.
\newblock {\em Nature}, 466:761--764, 2010.

\bibitem{andersen-2006-local}
Reid Andersen, Fan Chung, and Kevin Lang.
\newblock Local graph partitioning using {PageRank} vectors.
\newblock In {\em Proceedings of the 47th Annual IEEE Symposium on Foundations
  of Computer Science}, pages 475--486, 2006.

\bibitem{andersen-2006-seed}
Reid Andersen and Kevin~J. Lang.
\newblock Communities from seed sets.
\newblock In {\em Proceedings of the 15th International Conference on World
  Wide Web}, pages 223--232, 2006.

\bibitem{barabasi-1999-powerlaw}
Albert-Laszlo Barab\'{a}si and Reka Albert.
\newblock Emergence of scaling in random networks.
\newblock {\em Science}, 286(5439):509--512, 1999.

\bibitem{bonchi-2012-katz}
Francesco Bonchi, Pooya Esfandiar, David~F. Gleich, Chen Greif, and Laks~V.S.
  Lakshmanan.
\newblock Fast matrix computations for pairwise and columnwise commute times
  and {Katz} scores.
\newblock {\em Internet Mathematics}, 8(1-2):73--112, 2012.

\bibitem{burt-1995-structural}
Ronald~S. Burt.
\newblock {\em Structural Holes: The Social Structure of Competition}.
\newblock Harvard University Press, 1995.

\bibitem{coscia-2012-demon}
M.~Coscia, G.~Rossetti, F.~Giannotti, and D.~Pedreschi.
\newblock Demon: a local-first discovery method for overlapping communities.
\newblock In {\em Proceedings of the 18th ACM International Conference on
  Knowledge Discovery and Data mining}, pages 615--623, 2012.

\bibitem{dhillon-2007-graclus}
Inderjit~S. Dhillon, Yuqiang Guan, and Brian Kulis.
\newblock Weighted graph cuts without eigenvectors: A multilevel approach.
\newblock {\em IEEE Transactions on Pattern Analysis and Machine Intelligence},
  29(11):1944--1957, 2007.

\bibitem{easley-2010-sna}
David Easley and Jon Kleinberg.
\newblock {\em Networks, Crowds, and Markets}.
\newblock Cambridge University Press, 2010.

\bibitem{gargi-2011-youtube}
Ullas Gargi, Wenjun Lu, Vahab Mirrokni, and Sangho Yoon.
\newblock Large-scale community detection on {Y}ou{T}ube for topic discovery
  and exploration.
\newblock In {\em Proceedings of the 5th International AAAI Conference on
  Weblogs and Social Media}, pages 486--489, 2011.

\bibitem{gleich-2012-neighborhoods}
David~F. Gleich and C.~Seshadhri.
\newblock Vertex neighborhoods, low conductance cuts, and good seeds for local
  community methods.
\newblock In {\em Proceedings of the 18th ACM International Conference on
  Knowledge Discovery and Data mining}, pages 597--605, 2012.

\bibitem{kang-2011-caveman}
U.~Kang and Christos Faloutsos.
\newblock Beyond `caveman communities': Hubs and spokes for graph compression
  and mining.
\newblock In {\em Proceedings of the 11th IEEE International Conference on Data
  Mining}, pages 300--309, 2011.

\bibitem{karypis-1998-metis}
George Karypis and Vipin Kumar.
\newblock Multilevel k-way partitioning scheme for irregular graphs.
\newblock {\em Journal of Parallel and Distributed Computing}, 48:96--129,
  1998.

\bibitem{kloumann-2014-membership}
Isabel~M. Kloumann and Jon~M. Kleinberg.
\newblock Community membership identification from small seed sets.
\newblock In {\em Proceedings of the 18th ACM International Conference on
  Knowledge Discovery and Data mining}, pages 1366--1375, 2014.

\bibitem{lancichinetti-2011-oslom}
A.~Lancichinetti, F.~Radicchi, J.~Ramasco, and S.~Fortunato.
\newblock Finding statistically significant communities in networks.
\newblock {\em PLoS ONE}, 6(4), 2011.

\bibitem{Lee-2013-community}
Juyong Lee, Steven~P. Gross, and Jooyoung Lee.
\newblock Improved network community structure improves function prediction.
\newblock {\em Scientific Reports}, 3:2197, July 2013.

\bibitem{leskovec-2009-natural}
Jure Leskovec, Kevin~J. Lang, Anirban Dasgupta, and Michael~W. Mahoney.
\newblock Community structure in large networks: Natural cluster sizes and the
  absence of large well-defined clusters.
\newblock {\em Internet Mathematics}, 6(1):29--123, 2009.

\bibitem{mahoney-2012-local}
Michael~W. Mahoney, Lorenzo Orecchia, and Nisheeth~K. Vishnoi.
\newblock A local spectral method for graphs: With applications to improving
  graph partitions and exploring data graphs locally.
\newblock {\em Journal of Machine Learning Research}, 13(1):2339--2365, 2012.

\bibitem{mislove-2008-flickr}
A.~Mislove, H.~S. Koppula, K.~P. Gummadi, P.~Druschel, and B.~Bhattacharjee.
\newblock Growth of the {F}lickr social network.
\newblock In {\em Proceedings of the 1st Workshop on Online Social Networks},
  pages 25--30, 2008.

\bibitem{page1999-pagerank}
Lawrence Page, Sergey Brin, Rajeev Motwani, and Terry Winograd.
\newblock The {PageRank} citation ranking: Bringing order to the web.
\newblock Technical Report 1999-66, Stanford University, November 1999.

\bibitem{palla-2005-overlapping}
Gergely Palla, Imre Der\'{e}nyi, Ill\'{e}s Farkas, and Tam\'{a}s Vicsek.
\newblock Uncovering the overlapping community structure of complex networks in
  nature and society.
\newblock {\em Nature}, 435:814--818, 2005.

\bibitem{pan-2004-cross}
Jia-Yu Pan, Hyung-Jeong Yang, Christos Faloutsos, and Pinar Duygulu.
\newblock Automatic multimedia cross-modal correlation discovery.
\newblock In {\em Proceedings of the 10th ACM International Conference on
  Knowledge Discovery and Data Mining}, pages 653--658, 2004.

\bibitem{rees-2010-collective}
Bradley~S. Rees and Keith~B. Gallagher.
\newblock Overlapping community detection by collective friendship group
  inference.
\newblock In {\em International Conference on Advances in Social Networks
  Analysis and Mining}, pages 375--379, 2010.

\bibitem{seidman-1983-kcore}
Stephen~B. Seidman.
\newblock Network structure and minimum degree.
\newblock {\em Social Networks}, 5(3):269--287, 1983.

\bibitem{shen-2009-eagle}
Huawei Shen, Xueqi Cheng, Kai Cai, and Mao-Bin Hu.
\newblock Detect overlapping and hierarchical community structure in networks.
\newblock {\em Physica A}, 388(8):1706--1712, 2009.

\bibitem{song-2012-mslj}
Han~Hee Song, Berkant Savas, Tae~Won Cho, Vacha Dave, Zhengdong Lu, Inderjit~S.
  Dhillon, Yin Zhang, and Lili Qiu.
\newblock Clustered embedding of massive social networks.
\newblock {\em ACM SIGMETRICS Performance Evaluation Review}, 40(1):331--342,
  2012.

\bibitem{watts-1998-largecc}
Duncan~J. Watts and Steven~H. Strogatz.
\newblock Collective dynamics of 'small-world' networks.
\newblock {\em Nature}, 393(6684):440--442, 1998.

\bibitem{whang-2003-sse}
Joyce~Jiyoung Whang, David~F. Gleich, and Inderjit~S. Dhillon.
\newblock Overlapping community detection using seed set expansion.
\newblock In {\em Proceedings of the 22nd ACM International Conference on
  Information and Knowledge Management}, pages 2099--2108, 2013.

\bibitem{whang-2012-gem}
Joyce~Jiyoung Whang, Xin Sui, and Inderjit~S. Dhillon.
\newblock Scalable and memory-efficient clustering of large-scale social
  networks.
\newblock In {\em Proceedings of the 12th IEEE International Conference on Data
  Mining}, pages 705--714, 2012.

\bibitem{xie-2012-survey}
Jierui Xie, Stephen Kelley, and Boleslaw~K. Szymanski.
\newblock Overlapping community detection in networks: the state of the art and
  comparative study.
\newblock {\em ACM Computing Surveys}, 45(4):43:1--43:35, 2013.

\bibitem{yang-2013-bigclam}
Jaewon Yang and Jure Leskovec.
\newblock Overlapping community detection at scale: a nonnegative matrix
  factorization approach.
\newblock In {\em Proceedings of the 6th ACM International Conference on Web
  Search and Data Mining}, pages 587--596, 2013.

\bibitem{zhang-2007-fuzzy}
Shihua Zhang, Rui-Sheng Wang, and Xiang-Sun Zhang.
\newblock Identification of overlapping community structure in complex networks
  using fuzzy c-means clustering.
\newblock {\em Physica A}, 374(1):483--490, 2007.

\end{thebibliography}

\end{document}